\newtheorem{theorem}{Theorem}
\newtheorem{lemma}{Lemma}
\newtheorem{conjecture}{Conjecture}
\newtheorem{observation}{Observation}
\newcounter{linenumber}
\def\D{\ensuremath{\mathcal{D}}}
\def\R{\ensuremath{\mathcal{R}}}
\def\E{\ensuremath{\mathcal{E}}}
\newcommand{\remove}[1]{}
\def\E {\mathcal{E}}
\def\Write{\ensuremath{\textit{Write}}}
\def\Update{\ensuremath{\textit{Update}}}
\def\Snapshot{\ensuremath{\textit{Snapshot}}}
\def\Collect{\ensuremath{\textit{Collect}}}
\newcommand{\ignore}[1]{}
\title{Progress-Space Tradeoffs in Single-Writer Memory Implementations}
\author[1,2]{Damien Imbs$^*$}
\author[3]{Petr Kuznetsov$^*$}
\author[3]{Thibault Rieutord\footnote{This work has been supported by the Franco-German DFG-ANR Project DISCMAT (14-CE35-0010-02) devoted to connections between mathematics and distributed computing.}}
\affil[1]{LIF, Aix-Marseille Universit\'e \& CNRS, France}
\affil[2]{Bremen University, Germany
  
  \texttt{damien.imbs@lif.univ-mrs.fr}}
\affil[3]{LTCI, T\'el\'ecom ParisTech, Universit\'e Paris Saclay, France

  \texttt{firstname.lastname@telecom-paristech.fr}}
\date{}
\begin{document}

\maketitle

\begin{abstract}
Many algorithms designed for shared-memory distributed systems assume
the \emph{single-writer multi-reader} (SWMR) setting where 
each process is provided with a unique register 
that can only be written by the process and read by all. 
In a system where computation is performed by a bounded number $n$ of processes 
coming from a large (possibly unbounded) set of potential participants, 
the assumption of an SWMR memory is no longer reasonable. 
If only a bounded number of multi-writer multi-reader (MWMR) registers are provided, 
we cannot rely on an \emph{a priori} assignment of processes to registers. 
In this setting, implementing an SWMR memory, or equivalently,
ensuring \emph{stable writes} 
(i.e., every written value persists in the memory), is desirable.

In this paper, we propose an SWMR implementation that adapts
the number of MWMR registers used to the desired progress condition.
For any given $k$ from $1$ to $n$, we present an algorithm that uses
$n+k-1$ registers to implement a \emph{$k$-lock-free} SWMR memory.
In the special case of $2$-lock-freedom, 
we also give a matching lower bound of $n+1$ registers, 
which supports our conjecture that the algorithm is space-optimal.
Our lower bound holds for the strictly weaker progress condition of
$2$-obstruction-freedom, which suggests that the space complexity for
$k$-obstruction-free and $k$-lock-free SWMR implementations might coincide. 
\end{abstract}

\section{Introduction}

We consider a distributed computing model in which 
at most $n$ \emph{participating} processes communicate 
via reading and writing to a shared memory.
The participating processes come from a possibly unbounded set 
of \emph{potential} participants: each process has a unique identifier
(IP address, RFID, MAC address, etc.) which we, 
without loss of generality, assume to be an integer value.
Given that processes do not have an \emph{a priori} knowledge 
of the participating set, it is natural to assume that they can only 
\emph{compare} their identifiers to establish their relative order, 
otherwise they essentially run the same algorithm~\cite{Ram30}.
This model is therefore called \emph{comparison-based}~\cite{ABDPR90}. 
In the comparison-based model with bounded shared memory, 
we cannot assume that the processes are provided with 
a prior assignment of processes to distinct registers.
The only suitable assumption, as is the case for anonymous systems~\cite{Y16}, 
is that processes have access to \emph{multi-writer multi-reader} registers~(MWMR).

In this paper, we study the \emph{space complexity} of comparison-based implementations 
of an \emph{abstract} single-writer multi-reader (SWMR) memory.
The abstract SWMR memory allows each participating process to \emph{write} 
to a private abstract memory location and to \emph{read} 
from the abstract memory locations of participating processes.
The SWMR abstraction can be further used to build higher-level
abstractions, such as renaming~\cite{ABDPR90} and atomic snapshot~\cite{AAD93}.

To implement an SWMR memory, we need to  ensure that every write performed 
by a participating process on its abstract SWMR register is \emph{persistent}: 
every future abstract read must see the written value, 
as long as it has not been replace by a more recent \emph{persistent} write.
To achieve persistence in a MWMR system, the emulated abstract write 
may have to update \emph{multiple} base MWMR registers in order to ensure
that its value is not overwritten by other processes.
A natural question arises: \emph{How many base MWMR registers do we need?}

In this paper, we show that the answer depends on the desired progress condition.
It is immediate that $n$ registers are required for a \emph{lock-free} implementation, 
i.e., we want to ensure that at least \emph{one} correct process makes progress.
Indeed, any algorithm using $n-1$ or less registers can be brought into the situation 
where \emph{every} base register is \emph{covered}, i.e., 
a process is about to execute a write operation on it~\cite{BL83}.
If we let the remaining process $p_i$ complete a new abstract write operation,
the other $n-1$ processes may destroy the written value by making 
a \emph{block write} on the covered registers 
(each covering process performs its pending write operation).
Thus, the value written by $p_i$ is ``lost'': no future read would find it. 
It has been recently shown that $n$ base registers are not only necessary, 
but also sufficient for a lock-free implementation~\cite{DFGR15}.

A \emph{wait-free} SWMR memory implementation that guarantees progress 
to \emph{every} correct process can be achieved with $2n-1$ registers~\cite{DFGR15}. 
The two extremes, lock-freedom and wait-freedom, suggest an intriguing question: 
is there a dependency between the amount of progress the implementation provides 
and its space complexity: if processes are guaranteed more progress, 
do they need more base registers?

\subparagraph*{Contributions.}
In this paper, we give an evidence of such a dependency.
Using novel covering-based arguments, we show that 
any \emph{$2$-obstruction-free} algorithm requires $n+1$ base MWMR registers.
Recall that $k$-obstruction-freedom requires that every correct process 
makes progress under the condition that at most $k$ processes are correct~\cite{Tau09}.
The stronger property of \emph{$k$-lock-freedom}~\cite{BG15} additionally guarantees that 
if more than $k$ processes are correct, then at least $k$ out of them make progress. 

We also provide, for any $k=1,\ldots, n$, a $k$-lock-free SWMR memory implementation 
that uses only $n+k-1$ base registers. 
Our lower bound and the algorithm suggest the following:

\begin{conjecture}
It is impossible to implement a $k$-obstruction-free SWMR memory 
in the $n$-process comparison-based model using $n+k-2$ MWMR registers.
\end{conjecture}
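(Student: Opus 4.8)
The plan is to argue by contradiction. Assume $\Alg$ is a $k$-obstruction-free SWMR implementation for $n$ processes using only $m = n+k-2$ base MWMR registers, and derive a contradiction by generalizing the covering construction behind the $n+1$ lower bound for $2$-obstruction-freedom established above. Two ingredients carry over. First, \emph{covering}: a configuration is dangerous when a set $B$ of processes is poised, each about to write to a distinct base register; letting each process of $B$ take one step (a \emph{block write}) then wipes the content of all those registers, after which the processes of $B$ may be declared faulty, since a faulty process is allowed finitely many steps. Second, the \emph{comparison-based} assumption: from any finite execution we may recruit a brand-new participant whose identifier is inserted into the gaps of (or outside) the identifiers already in use, so that after an order-preserving relabeling the processes already taking steps cannot distinguish the new execution from the old one; this lets us repeatedly summon fresh processes that faithfully replay the behavior of processes already seen.

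The construction proceeds in \emph{layers}. We build a sequence of configurations in which progressively more base registers are covered by poised, to-be-faulty block writers, maintaining the invariant that every value written persistently so far is stored only inside the currently covered set $R$. To add a layer we wake a fresh process $q$ and run it solo; by obstruction-freedom its pending abstract write completes and, by persistence, its value must end up stored in a register outside $R$ --- otherwise firing the current block writers together with a block write over $R$ would erase the value everywhere, and a subsequently scheduled abstract read would be forced to return the erased value. Rewinding $q$ to just before the corresponding base write leaves $q$ poised over a fresh register; adding it to $B$ and to $R$ extends the covered set by one. Iterating this as far as the participant budget permits --- reusing processes that have completed their operations --- pushes $R$ towards $n-1$ registers while confining all persistent values to $R$.

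The final step is the counting that must extract the extra additive $k-1$ over the lock-free bound $n$. With $R$ as large as the construction allows, only $m-(n-1)=k-1$ registers were never covered. One then runs $k$ fresh processes \emph{concurrently} through abstract writes of $k$ distinct new values; by $k$-obstruction-freedom all complete, so all $k$ values become simultaneously persistent. Firing the $n-1$ block writers confines those $k$ values to the $\le k-1$ uncovered registers, and one seeks a contradiction by showing that no admissible continuation lets a reader recover all $k$ of them --- this is precisely where $k$-obstruction-freedom, rather than plain obstruction-freedom, must be invoked, mirroring the way a single uncovered register and two concurrent writers clash in the $k=2$ case.

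I expect the main obstacle to be exactly this last counting step in full generality, together with maintaining all the invariants simultaneously. For $k>2$ the adversary must juggle a large dormant block-writer set, a pool of fresh processes replaying old behavior, and $k$ genuinely concurrent writers, all within the $n$-process budget; processes that complete operations may be reused, which provides the needed room but complicates the bookkeeping, and a naive induction on $k$ appears to lose a register per level. Making the argument go through most likely requires a genuinely \emph{recursive} covering structure --- nested block-writer sets released in stages, each stage re-deriving a smaller instance of the same obstruction --- and a delicate schedule of when each fresh identifier is introduced, so that dormant block writers never take a step prematurely. It is the difficulty of simultaneously controlling disjointness of the coverings, confinement of persistent values, the existence of an admissible reader schedule, and the participant budget that leaves the general statement at the level of a conjecture.
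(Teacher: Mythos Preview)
First, note that the paper does \emph{not} prove this statement: it is explicitly stated as a conjecture, and the only cases established are $k=1$ (the trivial $n$-register lower bound) and $k=2$ (Section~\ref{sec:lb}). So there is no ``paper's own proof'' to compare against; what we can compare is your outline with the paper's actual $k=2$ argument and see whether your plan scales.

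Your outline has a concrete gap at the ``final step'', and it already bites at $k=2$. After building the covering you have $n-1$ processes frozen as block writers (they are rewound to a pending base write, so they have \emph{not} completed any high-level operation and cannot be ``reused''). That leaves exactly one free participant, since the model bounds the number of participating processes by $n$. You therefore cannot ``run $k$ fresh processes concurrently'' for any $k\geq 2$; the participant budget is $n-1+k>n$. Reducing the covered set to free up processes ruins the pigeonhole count you rely on. There is a second, independent issue: ``$k$ persistent values confined to $k-1$ registers'' is not by itself a contradiction, because a single MWMR register may store an arbitrary compound value (e.g., a view containing all $k$ values); you would need an argument about what the $k$ concurrent writers could have deposited there, not just a cardinality count.

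This is precisely why the paper's $k=2$ proof does \emph{not} follow the cover-then-pigeonhole route. Instead it develops the notion of \emph{confusion}: a set $\mathcal{D}$ of reachable configurations, indistinguishable to a set $P$ of processes, in which the processes outside $P$ may cover \emph{every} strict subset of a register set $S$. The core induction (Lemmata~\ref{lem:initConfusion} and~\ref{lem:inductiveConfusion}) grows $S$ until a \emph{single} process $p$ is confused on \emph{all} $n$ registers. The payoff is that $p$, running solo, must write infinitely often to every register; then \emph{one} additional process can be hidden behind $p$'s writes forever, contradicting $2$-obstruction-freedom. Crucially, the contradiction is obtained with only two active processes at the end, never exceeding the budget. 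Your closing paragraph is right that something ``genuinely recursive'' is needed; the confusion machinery is exactly such a structure for $k=2$, and finding the right generalization to arbitrary $k$ is the open problem the conjecture records.
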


An interesting implication of our results is that $2$-lock-free and 
$2$-obstruction-free SWMR implementations have the same optimal space complexity.
Given that $n$-obstruction-freedom and $n$-lock-freedom coincide with wait-freedom,
we expect that, for all $k=1,\ldots,n$, \mbox{$k$-obstruction-free} and $k$-lock-free 
(and all progress conditions in between~\cite{BG15}) 
require the same number $n+k-1$ of base MWMR registers.
Curiously, our results highlight a contrast between complexity and computability, 
as we know that certain problems, e.g., consensus, 
can be solved in an obstruction-free way, but not in a lock-free way~\cite{HLM03}.

\subparagraph*{Related work.}
\label{sec:related}
Jayanti, Tan and Toueg~\cite{JTT96} gave linear lower bounds 
on the space complexity of implementing a large class 
of \emph{perturbable} objects (such as CAS and counters).
For atomic-snapshot algorithms, Fatourou, Ellen and Ruppert~\cite{FER06} 
showed that there is a tradeoff between the time and space complexities, 
both in the anonymous and the non-anonymous cases. 
Zhu~\cite{Zhu16} showed that $n-1$ MWMR registers are required 
for obstruction-free consensus. 

Delporte et al.~\cite{DFKR15} studied the space complexity 
of anonymous $k$-set agreement using MWMR registers, 
and showed a dependency between space complexity and progress conditions. 
In particular, they provide a lower bound of $n-k+m$ MWMR registers 
to solve anonymous \emph{repeated} $k$-set agreement 
in the $m$-obstruction-free way, for $k<m$.
Delporte et al.~\cite{DFGR13} showed that obstruction-free $k$-set agreement 
can be solved in the $n$-process comparison-based model using $2(n-k) + 1$ registers.
This upper bound was later improved to $n-k+m$ for the progress condition of 
$m$-obstruction-freedom ($m\leq k$) by Bouzid, Raynal and Sutra~\cite{BRS15}. 
In particular, their algorithm uses less than $n$ registers when~$m<k$. 
 
To our knowledge, the only lower bound on the space-complexity 
of implementing an SWMR memory has been given by Delporte et al.~\cite{DFGR15}
who showed that lock-free comparison-based implementations require $n$ registers.

Delporte et al.~\cite{DFGR15} proposed two SWMR memory implementations:
a lock-free one, using~$n$ registers, and a wait-free one, using $2n-1$ registers. 
These algorithms are used in~\cite{DFGL13} to implement a \emph{uniform} SWMR memory, 
i.e., assuming no prior knowledge on the number of participating processes.
Assuming that $p$ processes participate, the algorithms use $3p+1$ 
and $4p$ registers for, respectively, lock-freedom and wait-freedom.

\subparagraph*{Roadmap.}
The paper is organized as follows.
Section~\ref{sec:prel} defines the system model and states the problem.
Section~\ref{sec:alg} presents a $k$-lock-free SWMR memory implementation.
Section~\ref{sec:lb} shows that a $2$-obstruction-free SWMR memory implementation 
requires $n+1$ MWMR registers and hence that our algorithm is optimal for $k=2$.
Section~\ref{sec:disc} concludes the paper with implications and open questions.

\section{Model}
\label{sec:prel}

We consider the asynchronous shared-memory model, in which a bounded number $n>1$ 
of asynchronous crash-prone processes communicate by applying read and 
write operations to a bounded number $m$ of \emph{base} atomic multi-writer
multi-reader atomic registers.
An atomic register $i$ can be accessed with two memory operations: 
$\textit{write}(i,v)$ that replaces the content of the register with value $v$, 
and $\textit{read}(i)$ that returns its content.
The processes are provided with unique identifiers from an unbounded name space.
Without loss of generality, we assume that the name space is the set of positive integers.

\subsection{States, configurations and executions}

An algorithm assigned to each process is a (possibly non-deterministic) automaton
that accepts high-level operation requests as an application input.
In each state, the process is poised to perform a \emph{step}, i.e., 
a read or write operations on base registers. 
Once the step is performed, the process changes its state
according to the result the step operation, possibly non-deterministically and 
possibly to a step corresponding to another high-level operation.

A \emph{configuration}, or system state, consists of 
the state of all processes and the content of all MWMR registers.
In the \emph{initial} configurations, all processes are in their
initial states, and all registers carry initial values.

We say that a step $e$ by a process $p$ is \emph{applicable} to a configuration $C$, 
if $e$ is the pending step of $p$ in $C$, and we denote $Ce$ the configuration 
reached from $C$ after $p$ performed~$e$. A sequence of steps $e_1,e_2,\ldots$ 
is applicable to $C$, if $e_1$ is applicable to $C$, $e_2$ is applicable to~$Ce_1$, etc. 
A (possibly infinite) sequence of steps applicable to a configuration $C$ 
is called an \emph{execution from $C$}. A configuration $C$ is said to 
be \emph{reachable} from a configuration $C'$, and denoted $C\in\mathit{Reach}(C')$, 
if there exists a finite execution $\alpha$ applicable to $C'$, such that~$C=C'\alpha$. 
If omitted, the starting configuration is the initial configuration,
and is denoted as~$C\in \mathit{Reach}$. 

Processes that take at least one step of the algorithm are called \emph{participating}.
A process is called \emph{correct} in a given (infinite) execution 
if it takes infinitely many steps in that execution.
Let $\textit{Correct}(\alpha)$ denote the set of 
correct processes in the execution $\alpha$.

\subsection{Comparison-based algorithms}

We assume that the processes are allowed to use their identifiers 
only to compare them with the identifiers of other processes: 
the outputs of the algorithm only depend on the inputs,
the relative order of the identifiers of the participating processes,
and the schedule of their steps.
Formally, we say that an algorithm is \emph{comparison-based}, if, 
for each possible execution $\alpha$, by replacing the identifiers
of participating processes with new ones preserving their relative order, 
we obtain a valid execution of the algorithm. 
Notice that the assumption does not preclude using the identifiers 
in communication primitives, it only ensures that decisions taken in the
algorithm's run are taken only based on the identifiers relative order.

In this model, $m$ MWMR registers can be used to implement a wait-free 
$m$-component \emph{multi-writer atomic-snapshot} memory~\cite{AAD93}.
The memory exports operations $\Update(i,v)$ 
(updating position~$i$ of the memory with value $v$) 
and $\Snapshot()$ (atomically returning the contents of the memory).
In the comparison-based atomic-snapshot implementation, 
easily derived from the original one~\cite{AAD93}, 
$\Update(i,v)$ writes only once, to register $i$, and $\Snapshot()$ is read-only. 
For convenience, in our upper-bound algorithm we are going to use 
atomic snapshots instead of read-write registers.

\subsection{SWMR memory}\label{subsec:safety}
A single-writer multi-reader (SWMR) memory exports two operations: 
$\Write()$ that takes a value as a parameter and 
$\Collect()$ that returns a \emph{multi-set} of values.
It is guaranteed that, in every execution, there exists a reading map $\pi$
that associates each complete Collect operation~$C$, 
returning a multi-set $V=\{v_1,\ldots,v_s\}$, 
with a set of $s$ Write operations $\{w_1,\ldots,w_s\}$ performed,
respectively, by distinct processes $p_1,\dots,p_s$ such that:

\begin{itemize}  
\item The set $\{p_1,\ldots,p_s\}$ contains all processes that completed 
	at least one write operation before the invocation of $C$;  
  
\item For each $i=1,\ldots,s$, $w_i$ is either the last write operation 
	of process $p_i$ preceding the invocation of $C$ 
	or a write operation of $p_i$ concurrent with $C$. 
\end{itemize}

Note that our definition does not guarantee atomicity of SWMR operations. 
Moreover, we do not require that processes are allocated with 
a unique MWMR register that can be used as a single writer register. 
Instead, we simply require that processes are able to simulate the 
use of single writer registers through implementing the SWMR memory.

Intuitively, a collect operation can be seen as a sequence of reads on
\emph{regular} registers~\cite{Lam86}, each associated with a distinct
participating process.
Such a collect object can be easily transformed into a
\emph{single-writer} atomic snapshot abstraction~\cite{AAD93}.

\subsection{Progress conditions}\label{subsec:progress}

In this paper we focus on two families of progress conditions, 
both generalizing the \emph{wait-free} progress condition, 
namely $k$-\emph{lock-freedom} and $k$-\emph{obstruction-freedom}.

An execution $\alpha$ satisfies the property of 
$k$-\emph{lock-freedom}~\cite{BG15} (for $k\in\{1,\dots, n\}$) 
if at least $\min(k,\textit{Correct}(\alpha))$ correct processes 
\emph{make progress} in it, i.e., complete infinitely many high-level operations 
(in our case, Writes and Collects).  
The special case of $n$-lock-freedom is called \mbox{\emph{wait-freedom}}.  
The property of $k$-\emph{obstruction-freedom}~\cite{HLM03,Tau09}
requires that every correct process makes progress, under
the condition that there are at most $k$ correct processes.
(If more than $k$ processes are correct, no progress is guaranteed.) 

In particular, $k$-lock-freedom is a stronger requirement than $k$-obstruction-freedom 
(strictly stronger for~$1\leq k< n$). Indeed, both require that every correct 
process makes progress when there are at most $k$ correct processes, 
but $k$-lock-freedom additionally requires that some progress is made 
even if there are more than $k$ correct processes. 

\section{Upper bound: k-lock-free SWMR memory with n+k-1 registers}
\label{sec:alg}

Consider a \emph{full-information} algorithm in which every process 
alternates atomic snapshots and updates, where each update performed 
by a process incorporates the result of its preceding snapshot.
Every value written to a register will \emph{persist} 
(i.e., will be present in the result of every subsequent snapshot), 
unless there is another process poised to write to that register.
The pigeonhole principle implies that $k$ processes can cover at most
$k$ distinct registers at the same time.
Thus, if, at a given point of a run, a value is present in $n$ registers,
then the value will persist. 
This observation implies a simple $n$-register \emph{lock-free} SWMR implementation
in which a high-level Write operation alternates snapshots and updates
of all registers, one by one in the round-robin fashion, 
until the written high-level value  is present in all $n$ registers.
A high-level Collect operation can simply return the set of the most recent values 
(defined using monotonically growing sequence numbers) returned by a snapshot operation. 

The \emph{wait-free} SWMR memory implementation in~\cite{DFGR15} using $2n-1$ registers 
follows the \mbox{$n$-register} lock-free algorithm but, roughly,
for each participating process, replaces register~$n$ with register $n-1+\textit{pos}$, 
where $\textit{pos}$ is the rank of the process among the currently
observed participants. This way, there is a time after which 
every participating  process has a dedicated register to write, 
and each value it writes will persist.
In particular, every value it writes will be seen by all processes and 
will eventually be propagated to the $n-1$ first registers.

To implement a $k$-lock-free SWMR memory using $n+k-1$ registers,
a process should determine, in a dynamic fashion, 
to which out of the last $k-1$ registers to write.
In our algorithm, by default, a Write operation only uses  the first $n$ registers, 
but if a process observes that its value is absent from some registers in the snapshot
(some of its previous writes have been overwritten by other processes), 
it uses extra registers to propagate its value. 
The number of these extra registers depends on 
how many other processes have been observed making progress. 

\begin{algorithm}
\begin{small}
 \caption{$k$-lock-free SWMR implementation using $n+k-1$ MWMR registers.\label{Alg:k-LockFree}}
\SetKwRepeat{Write}{Write(v):}{End Write}
\SetKwRepeat{Collect}{Collect():}{End Collect}
\SetKwRepeat{Do}{do}{while}
$\mathit{View}:$ \textbf{list of triples of type} $(\mathit{ValueType},\mathit{IdType},\mathbb{N})$, \textbf{initially set to} $\emptyset$\;
$\mathit{opCounter} \in \mathbb{N}$, \textbf{initially set to} $0$\;

\vspace{1em}

\Write{}{
	$\mathit{ActiveProcs} = \{\mathit{id}\}$\;
	$\mathit{View} = \mathit{View}\cup (v,\mathit{id},\mathit{opCounter})$\;\label{Alg:l:kLF_AddOp}
	$\mathit{WritePos} = 0$\;	
	$\mathit{WritePosMax} = n$\;
	\Do{
	$|\{m\in  \{1, \dots, n+k-1\}, (v,\mathit{id},\mathit{opCounter}) \in \mathit{Snap}[m]\}|< n$\label{Alg:l:kLF_ValidateWrite}
	} {\label{Alg:l:kLF_WhileStart}
		$\mathit{Snap} = \mathit{MEM}.\mathit{snapshot}()$\;\label{Alg:l:kLF_TakeSnapshot}
		$\mathit{ActiveProcs} = \mathit{ActiveProcs} \cup \{\mathit{pid}: \exists (\_,\mathit{pid},c)\in \mathit{Snap}, \forall (\_,\mathit{pid},c')\in \mathit{View}, c>c'\}$\;
		$\mathit{View} = \mathit{View}\cup \mathit{Snap}$\;\label{Alg:l:kLF_AddSnapshot}
		$\mathit{Update}(\mathit{MEM}[\mathit{WritePos}],\mathit{View})$\;\label{Alg:l:kLF_WriteValue}
		$\mathit{WritePos} = \mathit{WritePos} + 1 \pmod{\mathit{WritePosMax}}$\;\label{Alg:l:kLF_WritePos}
		$\mathit{WritePosMax} = min(n+|\mathit{ActiveProcs}|-1,n+k-1)$\;\label{Alg:l:kLF_IncrMaxWritePos}
	}\label{Alg:l:kLF_WhileEnd}
	$\mathit{opCounter}=\mathit{opCounter}+1$\;
}
\vspace{1em}

\Collect{}{
	$\mathit{Reads} = \mathit{MEM}.\mathit{snapshot}()$\;
	$ V = \emptyset$\;
	\ForAll{$pid$ \textbf{such that} $(\_,pid,\_)\in \mathit{Reads}$}{
		$V = V\cup\{v\}$ \textbf{with} $v$ \textbf{such that} $(v,\mathit{pid},\max\{c\in \mathbb{N},(\_,\mathit{pid},c)\in \mathit{Reads}\})\in \mathit{Reads}$\;\label{Alg:l:ReadSelect}
	}	
	\textbf{Return} $V$\;
}
\end{small}
\end{algorithm}

\subsection{Overview of the algorithm}
Our $k$-lock-free SWMR implementation, which uses $n+k-1$ base MWMR registers, 
is presented in Algorithm~\ref{Alg:k-LockFree}. 

In a Write operation, the process adds the operation to be performed 
to its local view~(line~\ref{Alg:l:kLF_AddOp}). 
The process then attempts to add its local view, together with the
outcome of a snapshot, to each of the first $\mathit{WritePosMax}$, 
initially $n$, registers (lines~\ref{Alg:l:kLF_WhileStart}--\ref{Alg:l:kLF_WhileEnd}).
At each loop, $\mathit{WritePosMax}$ is set to the smaller value between 
the number of processes observed as concurrently active 
and the number of registers available (line~\ref{Alg:l:kLF_IncrMaxWritePos}).
The writing process continues to do so until its Write operation
value is present in at least $n$ registers (line~\ref{Alg:l:kLF_ValidateWrite}).

In this algorithm, the $k-1$ extra registers are used according to the 
liveness observed by blocked processses. In order to be allowed to 
use the last register, a process must fail to complete its write while 
observing at least $k-1$ other processes completing their own. This 
ensures that when a process access this last register, a $k^{th}$ process
is able to be observed by processes completing operations and thus will
be helped to eventually complete.

The Collect operation is rather straightforward.
It simply takes a snapshot of the memory and, 
for each participating process observed in the memory,
it returns its most recent value (selected using associated sequence numbers, line~\ref{Alg:l:ReadSelect}).

\subsection{Safety} 

At a high level, the safety of Algorithm~\ref{Alg:k-LockFree} relies 
on the following property of register content stability:

\begin{lemma}
Let, at some point of a run of the algorithm, value $(v,id,c)$ be
present in some register~$r$ and such that  
no process is poised to execute an update on $r$ (i.e., no process is between taking the snapshot of $\textit{MEM}$ 
(line~\ref{Alg:l:kLF_TakeSnapshot}) and the update of $r$ 
(line~\ref{Alg:l:kLF_WriteValue})), 
then at all subsequent times $(v,id,c) \in r$, i.e., the value is
present in the set of values stored in $r$.\label{Lem:PersistentuncoveredWrites}
\end{lemma}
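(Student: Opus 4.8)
The plan is to argue by contradiction: suppose $(v,id,c)$ is present in register $r$ at some time $t_0$ at which no process is poised to update $r$, but there is a later time at which $(v,id,c)\notin r$. Consider the first update to $r$ after $t_0$ that removes $(v,id,c)$ — call it step $u$, performed by process $q$, writing its local variable $\mathit{View}$ at line~\ref{Alg:l:kLF_WriteValue}. Since registers are overwritten wholesale by an $\Update$, the content of $r$ immediately after $u$ is exactly $q$'s $\mathit{View}$ at that moment, so $(v,id,c)\notin \mathit{View}_q$ at the time of $u$.

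The key observation is that $\mathit{View}$ is monotone: inspection of the algorithm shows $\mathit{View}$ is only ever grown, by unions at lines~\ref{Alg:l:kLF_AddOp} and~\ref{Alg:l:kLF_AddSnapshot} (and it is never reset — it is a persistent local variable across Write operations). Moreover, every update $q$ performs at line~\ref{Alg:l:kLF_WriteValue} writes a superset of the $\mathit{Snap}$ it most recently read at line~\ref{Alg:l:kLF_TakeSnapshot}, because line~\ref{Alg:l:kLF_AddSnapshot} folds $\mathit{Snap}$ into $\mathit{View}$ before the update. Since $q$ is not poised to update $r$ at time $t_0$ (by hypothesis no process is), $q$'s snapshot $\mathit{Snap}$ that precedes step $u$ is taken at some time $t_1 > t_0$. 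I would then show $(v,id,c)$ is already in $r$ at time $t_1$: indeed $(v,id,c)\in r$ at $t_0$, and the first update to $r$ removing it is $u$ itself, which happens after $t_1$; so at $t_1$ the register $r$ still contains $(v,id,c)$, hence $(v,id,c)\in \mathit{Snap}[r]\subseteq \mathit{Snap}$, hence $(v,id,c)\in \mathit{View}_q$ at the time of $u$ — contradicting the previous paragraph.

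The main subtlety to get right is the ordering: one must carefully take $u$ to be the \emph{first} post-$t_0$ update to $r$ that destroys the value (so that $r$ genuinely holds $(v,id,c)$ throughout $[t_0, u)$, in particular at $t_1$), and must confirm that the snapshot preceding $u$ in $q$'s current loop iteration indeed occurs after $t_0$ — which follows because at $t_0$ nobody (including $q$) is between lines~\ref{Alg:l:kLF_TakeSnapshot} and~\ref{Alg:l:kLF_WriteValue}, so $q$'s snapshot for iteration producing $u$ is strictly after $t_0$. A secondary point is the atomicity of $\Snapshot$: since the comparison-based snapshot implementation returns a consistent memory image, $\mathit{Snap}[r]$ equals the content of $r$ at the linearization point of the snapshot, which lies in $(t_0, u)$, so it contains $(v,id,c)$. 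With these pieces in place the contradiction is immediate, and a straightforward induction on the sequence of updates to $r$ after $t_0$ extends the conclusion to \emph{all} subsequent times.
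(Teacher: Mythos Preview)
Your proof is correct and follows essentially the same approach as the paper: a contradiction argument that identifies the first update $u$ to $r$ after $t_0$ that drops $(v,id,c)$, observes that the snapshot preceding $u$ must occur after $t_0$ (since no one is poised at $t_0$) and therefore sees $(v,id,c)$ in $r$, whence $(v,id,c)$ is in the $\mathit{View}$ that $u$ writes. The paper phrases the same step contrapositively (the snapshot would have to be before $\tau$, hence $q$ was poised at $\tau$), but the content is identical; your added remarks on the monotonicity of $\mathit{View}$ and snapshot atomicity are correct but implicit in the paper, and the final ``induction'' sentence is unnecessary since the contradiction already rules out any first removal.
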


\begin{proof}
Suppose that at time $\tau$, a register $R$ contains $(v,id,c)$ 
and no process is poised to execute an update on $R$.
Suppose, by contradiction, that $R$ does not contain it at some time~$\tau'>\tau$. 
Let~$\tau_{min}$,~$\tau_{min}>\tau$, be the smallest time 
such that $(v,id,c)$ is not in $R$. 
Therefore, a write must have been performed on $R$, by some process $q$, 
at time $\tau_{min}$ with a view which does not contain~$(v,id,c)$. 
Such a write can only be performed at line~\ref{Alg:l:kLF_WriteValue}, 
with a view including the last snapshot of $MEM$ performed by $q$ 
at line~\ref{Alg:l:kLF_TakeSnapshot}. 
Process $q$ must have performed this snapshot on $R$ at some~$\tau_R<\tau$ as~$(v,id,c)$ is 
present in $R$ between times $\tau$ and $\tau_{min}$ and as $\tau_R<\tau_{min}$. 
Thus $q$ is poised to write on~$R$ at time $\tau$ --- a contradiction.
\end{proof}

The persistence of the values in a specific uncovered register  
(Lemma~\ref{Lem:PersistentuncoveredWrites}) can be used to show
the persistence of the value of a completed Write operation in $\textit{MEM}$: 

\begin{restatable}{lemma}{persWrite}
If process $p$ returns from a Write operation $(v,id(p),c)$ at time $\tau$, 
then for any time~$\tau'\geq \tau$ there is a register containing $(v,id(p),c)$.
\label{Lem:PersistentWrites}
\end{restatable}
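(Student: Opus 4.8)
The plan is to show that once $p$ returns, the value $x=(v,\mathit{id}(p),c)$ never disappears from memory entirely, by pinning down the last snapshot $p$ took inside its \emph{Write} and running a pigeonhole argument at that instant. First I would identify the relevant time: when $p$ returns at $\tau$, it has just left the \texttt{do--while} loop, so the condition of line~\ref{Alg:l:kLF_ValidateWrite} evaluated to false on the last snapshot $\mathit{Snap}$, which was taken at line~\ref{Alg:l:kLF_TakeSnapshot} at some time $\tau_0\le\tau$; hence at least $n$ of the $n+k-1$ components of $\mathit{MEM}$ contain $x$ in $\mathit{Snap}$. Since $\mathit{Snap}$ is an atomic snapshot, at time $\tau_0$ at least $n$ registers actually contain $x$. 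It therefore suffices to prove that some register contains $x$ at every time $\ge\tau_0$.

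The key tool is a small refinement of Lemma~\ref{Lem:PersistentuncoveredWrites}. Call a pending update on a register $R$ — a process that has executed its last snapshot (line~\ref{Alg:l:kLF_TakeSnapshot}) but not yet the corresponding update of $R$ (line~\ref{Alg:l:kLF_WriteValue}) — \emph{$x$-safe} if the view it is about to write contains $x$, and \emph{$x$-unsafe} otherwise; since $R$'s content is obtained by overwriting it with such a view, $R$ can lose $x$ only through an $x$-unsafe update on $R$. Repeating the argument of Lemma~\ref{Lem:PersistentuncoveredWrites} almost verbatim (the write that first removes $x$ from $R$ after $\tau_0$, say at time $\tau_{\min}$, must be $x$-unsafe, and its underlying snapshot — missing $x$ in $R$ while $x$ is present in $R$ throughout $[\tau_0,\tau_{\min})$ — must predate $\tau_0$, so that update is already pending at $\tau_0$) yields: if at time $t$ a register $R$ contains $x$ and has no $x$-unsafe pending update, then $R$ contains $x$ at all times $\ge t$.

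Now I would run the counting argument at time $\tau_0$. Let $S$ be the set of registers containing $x$ at $\tau_0$, so $|S|\ge n$. Suppose, for contradiction, that at some time $\ge\tau_0$ no register contains $x$; then by the refinement above every $R\in S$ must carry an $x$-unsafe pending update at $\tau_0$. A process has at most one pending update, and it targets a single register, so these $x$-unsafe updates are performed by $|S|\ge n$ pairwise distinct processes. But $p$ is not among them: at $\tau_0$ its pending update is the one of line~\ref{Alg:l:kLF_WriteValue}, which writes $\mathit{View}$, and $\mathit{View}$ contains $x$ because $x$ was inserted at line~\ref{Alg:l:kLF_AddOp} and $\mathit{View}$ is never shrunk — so $p$'s pending update is $x$-safe. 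Hence at least $n$ processes distinct from $p$ are involved, which is impossible in an $n$-process system. This contradiction shows that some register contains $x$ at all times $\ge\tau_0$, in particular at every $\tau'\ge\tau$.

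The step I expect to be the main obstacle is precisely this refinement of Lemma~\ref{Lem:PersistentuncoveredWrites}: the naive counting that uses Lemma~\ref{Lem:PersistentuncoveredWrites} as stated fails, because $p$ could itself be poised to update one of the $\ge n$ registers holding $x$, and then one cannot conclude that $n$ processes \emph{other than} $p$ are needed to overwrite all of them. Tracking $x$-safety rather than mere coveredness, together with the observation that $p$'s own pending update can never destroy $x$, is exactly what restores the pigeonhole; the alternative — arguing directly that the particular register $p$ is about to write survives, via an induction over its successive updates — is more cumbersome but would also work.
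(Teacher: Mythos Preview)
Your proof is correct, but the ``main obstacle'' you anticipate is not there, and the paper's argument is accordingly simpler. You worry that $p$ might itself be poised to update one of the $\ge n$ registers holding $x$ at time $\tau_0$, forcing you to refine Lemma~\ref{Lem:PersistentuncoveredWrites} into an $x$-safe/$x$-unsafe version. The paper avoids this entirely by observing that at time $\tau_0$ (the paper's $\tau_S$) process $p$ is \emph{executing a snapshot}, i.e.\ a read; hence $p$ is not between a snapshot and an update on any register, and so $p$ does not cover anything at that instant. That leaves at most $n-1$ processes that can be poised to write, hence at most $n-1$ covered registers, while $\ge n$ registers contain $x$: one of them is uncovered, and Lemma~\ref{Lem:PersistentuncoveredWrites} applies to it verbatim.

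So both arguments are the same pigeonhole at $\tau_0$; the difference is only in how $p$ is excluded from the count. The paper excludes $p$ by timing (at the snapshot step, $p$ is reading), whereas you exclude $p$ by content (whatever $p$ writes contains $x$). Your refinement is sound and would be genuinely needed if one picked a time at which $p$ is mid-loop after its snapshot---but since the natural choice of $\tau_0$ is precisely the snapshot instant, it is not needed here. The payoff of your version is robustness to that choice of time; the payoff of the paper's version is that no refinement of Lemma~\ref{Lem:PersistentuncoveredWrites} is required at all.
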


\begin{proof}
Before returning from its Write operation, 
$p$ takes a snapshot of $\textit{MEM}$ 
at some time~$\tau_S$, $\tau_S<\tau$ (line~\ref{Alg:l:kLF_TakeSnapshot}), 
which returns a view of the memory in which at least 
$n$ registers contain the triplet~$(v,id(p),c)$. 
As $p$ is taking a snapshot at time $\tau_S$,  
at most $n-1$ processes can be poised to perform 
an update on some register at time $\tau_S$. 
As a process can be poised to perform an update on at most one register, 
there can be at most $n-1$ distinct registers covered at time~$\tau_S$. 
Therefore, at time $\tau_S$, there is at least one uncovered register
containing $(v,id(p),c)$, let us call it~$r$.
By Lemma~\ref{Lem:PersistentuncoveredWrites}, 
$(v,id(p),c)$ will be present in $r$ at any time $\tau'>\tau_S$, 
and thus, any time $\tau'>\tau$.
\end{proof}

With Lemma~\ref{Lem:PersistentWrites}, we can derive 
the safety of our SWMR memory implementation (Section~\ref{subsec:safety}):

\begin{theorem}
Algorithm~\ref{Alg:k-LockFree} safely implements an SWMR memory.
\end{theorem}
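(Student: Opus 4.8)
The plan is to verify that Algorithm~\ref{Alg:k-LockFree} satisfies the SWMR specification from Section~\ref{subsec:safety}, i.e., to exhibit a reading map $\pi$ for every execution. Given a complete Collect operation $C$ that returns $V=\{v_1,\dots,v_s\}$, the snapshot $\mathit{Reads}$ taken at the start of $C$ contains, for each $p_j$ appearing in $\mathit{Reads}$, a triple $(v_j,\mathit{id}(p_j),c_j)$ with $c_j$ maximal; I map $v_j$ to the Write operation of $p_j$ carrying counter $c_j$. First I would argue that every process $p_j$ that completed a Write before $C$ was invoked appears in $\mathit{Reads}$: by Lemma~\ref{Lem:PersistentWrites} its last completed Write $(v,\mathit{id}(p_j),c)$ persists in some register at all later times, hence in particular it is present in the snapshot underlying $C$, so $p_j$ contributes exactly one value to $V$ and the first bullet of the specification holds (and $s$ equals the number of such processes plus possibly some with only concurrent writes).

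Next I would argue the second bullet, namely that the selected write $w_j$ of $p_j$ is either its last write preceding $C$ or a write concurrent with $C$. Since the value $(v_j,\mathit{id}(p_j),c_j)$ appears in $\mathit{Reads}$, some Update with that triple was performed before $C$'s snapshot; by the structure of the algorithm, a process only writes a triple with counter $c$ during the execution of its Write operation with counter $c$ (the counter is fixed at the start of the Write and incremented only upon return, line after line~\ref{Alg:l:kLF_WhileEnd}), and a process writes monotonically nondecreasing counters over time. Hence the Write operation $w_j$ with counter $c_j$ either has already returned before $C$'s snapshot, or is still running at that time; in the first case I must check it is actually $p_j$'s \emph{last} completed write preceding $C$'s invocation, which follows because a later completed write would carry a strictly larger counter and, by Lemma~\ref{Lem:PersistentWrites}, would be present in $\mathit{Reads}$, contradicting the maximality of $c_j$ in line~\ref{Alg:l:ReadSelect}; in the second case $w_j$ is concurrent with $C$, as required.

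The main subtlety — and the step I expect to need the most care — is correctly relating counters written into $\MEM$ with the status (completed / concurrent / not-yet-started) of the corresponding Write operations, and in particular ruling out that $\mathit{Reads}$ reflects a \emph{stale} value of a process that has since been overwritten in a way that would make the reading map inconsistent. Here the monotonicity of $\mathit{opCounter}$ per process and the fact that $\mathit{View}$ is only ever grown (lines~\ref{Alg:l:kLF_AddOp},~\ref{Alg:l:kLF_AddSnapshot}) are the key invariants: once a triple with counter $c$ for $p_j$ enters any register, $p_j$ has already begun its $c$-th Write, so a Collect seeing it and no larger counter for $p_j$ is correctly pointing at the $c$-th Write, which is then either $p_j$'s last completed Write before $C$ or one concurrent with $C$. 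Combining the two bullets yields that $\pi$ is a valid reading map, which establishes the theorem. Note that the claim is only about the Collect/Write specification, not atomicity, so no linearization argument is needed; the persistence Lemma~\ref{Lem:PersistentWrites} does all the heavy lifting.
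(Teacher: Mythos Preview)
Your proposal is correct and follows essentially the same approach as the paper: define $\pi$ via the unique triple $(v,\mathit{id},c)$, use Lemma~\ref{Lem:PersistentWrites} to show every process with a completed Write appears in the snapshot, and use maximality of the counter to argue the selected Write is the last completed one or a concurrent one. Your treatment of the second bullet is in fact more careful than the paper's, which simply asserts that the value with the greatest sequence number ``comes from the last completed Write or from a concurrent one'' without spelling out the contrapositive via Lemma~\ref{Lem:PersistentWrites}; your explicit use of persistence of a hypothetical later completed Write to contradict maximality of $c_j$ is the right way to close that gap.
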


\begin{proof}
It can be easily observed that a triplet $(v,id,c)$ 
corresponds to a unique Write operation of a value $v$, 
performed by the process with identifier $id$. 
Therefore, a Collect operation returns a set of 
values proposed by Write operations from distinct processes,
and thus the map $\pi$ is well-defined.

By Lemma~\ref{Lem:PersistentWrites}, the value $(v,id,c)$ 
corresponding to a Write operation completed at time~$\tau$ 
is present in some register $r$ for any time $\tau'>\tau$. 
Thus, the set of values resulting from any snapshot operation performed 
after time $\tau$ contains $(v,id,c)$.
Hence, for any complete Collect operation $C$, $\pi(C)$ contains   
a value for every process which completed a Write operation
before $C$ was invoked. Also, as each value 
returned by a Collect is the value observed associated
to the greatest sequence number for a given process,
it comes from the last completed Write or
from a concurrent one.
\end{proof}

\subsection{Progress}
We will show, by induction on $k$, 
that Algorithm~\ref{Alg:k-LockFree} satisfies $k$-lock-freedom.
We first show, as in~\cite{DFGR15}, that Write 
operations of Algorithm~\ref{Alg:k-LockFree} are $1$-lock-free:

\begin{lemma}
Write operations in Algorithm~\ref{Alg:k-LockFree} 
satisfy $1$-lock-freedom.\label{Lem:1lock-Free}
\end{lemma}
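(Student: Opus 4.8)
The plan is to argue by contradiction: assume some execution $\alpha$ is such that no Write operation ever completes, yet at least one process is correct (takes infinitely many steps). Since a correct process that never returns from a Write must be looping forever in the \Do loop (lines~\ref{Alg:l:kLF_WhileStart}--\ref{Alg:l:kLF_WhileEnd}), it takes infinitely many snapshots and performs infinitely many updates. The goal is to show this is impossible: eventually some correct looping process must see its own value $(v,id,c)$ in at least $n$ registers and exit the loop.

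First I would fix attention on a time $\tau_0$ after which (i) the set of participating processes is stable, (ii) every process that takes only finitely many steps has stopped, and (iii) no process is in the middle of a Write operation that it will eventually complete — i.e., all processes still active after $\tau_0$ are looping forever inside some Write. Let $S$ be this set of forever-looping processes; by assumption $S\neq\emptyset$, and $|S|\le n$. Among the processes in $S$, consider the one, call it $p$, whose identifier is smallest (or: fix any $p\in S$; the precise choice will matter for the covering argument). The key observation is the same pigeonhole idea used for Lemma~\ref{Lem:PersistentWrites}: at any single point in time, the processes in $S$ together cover at most $|S|-1 \le n-1$ registers other than the ones they are about to write, but more to the point, once $p$ writes its value $(v,id(p),c)$ to a register $r$ at a moment when no process is poised to update $r$, Lemma~\ref{Lem:PersistentuncoveredWrites} guarantees the value stays in $r$ forever.

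So the heart of the argument is: over the infinitely many iterations that $p$ performs after $\tau_0$, it writes $(v,id(p),c)$ to registers $0,1,\dots$ in round-robin order (within range $\mathit{WritePosMax}\ge n$), and I want to show that for at least $n$ distinct registers, at some point after $p$'s update to that register no one overwrites it. The obstacle is that other looping processes in $S$ could keep overwriting $p$'s value in those registers. But each such overwrite by a process $q\in S$ requires $q$ to have taken a snapshot before $p$'s update and then written afterwards — $q$ is "poised" over that register during an interval. Since there are at most $n-1$ other processes, they can jointly keep at most $n-1$ registers perpetually "freshly covered" relative to $p$'s writes. Iterating $p$'s round-robin over all $n$ (or more) registers, one register must eventually receive a write from $p$ with no concurrent poised updater surviving — here I would formalize "no process poised to update $r$ at the moment just after $p$'s update completes" by choosing, for each register in turn, an iteration of $p$ late enough that all pending updates to that register present at $p$'s snapshot have been flushed, using that each $q\in S$ cycles through registers and writes each at most once per its own loop iteration.

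The main obstacle I anticipate is making the "eventually $p$'s value sticks in $n$ registers simultaneously" step fully rigorous: it is not enough that each register individually eventually retains $p$'s value at some time; I need $n$ of them to retain it at a common time, which is what triggers the loop exit condition (line~\ref{Alg:l:kLF_ValidateWrite}) on $p$'s next snapshot. The clean way is: by Lemma~\ref{Lem:PersistentuncoveredWrites}, once $p$'s value lands uncovered in a register it is permanent, so these events are monotone — the set of registers permanently containing $(v,id(p),c)$ only grows. Thus I just need to show this set reaches size $n$. Suppose it stabilizes at size $s<n$. Then there are $n-s\ge 1$ registers in $\{0,\dots,n-1\}$ that $p$ writes infinitely often but from which its value is later always removed; each such removal is an update by some process in $S\setminus\{p\}$, so $|S\setminus\{p\}|\ge 1$ and in fact each of these $n-s$ registers must be "claimed" by a distinct perpetual coverer among the $\le n-1$ other processes — but a counting argument along the round-robin schedule shows $p$ can always find an iteration where it writes one of these registers and the (at most $n-1$) other processes cannot all be simultaneously poised over exactly those $n-s$ registers in the right phase, contradiction. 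Once the set reaches size $n$, $p$'s next snapshot sees $n$ copies and $p$ exits its loop, contradicting $p\in S$.
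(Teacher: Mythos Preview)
Your argument has a genuine gap at the ``counting argument'' step, and the gap is not merely a matter of rigor: the approach does not work as stated.

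You fix a particular process $p\in S$ and try to show that $(v,\id(p),c)$ eventually becomes permanent in $n$ registers. The key sentence is ``Since there are at most $n-1$ other processes, they can jointly keep at most $n-1$ registers perpetually `freshly covered' relative to $p$'s writes.'' But the bound ``$n-1$ processes cover at most $n-1$ registers'' is a \emph{static} fact about a single instant; you are applying it \emph{dynamically} over $p$'s entire round-robin schedule. A single process $q\in S\setminus\{p\}$ can be poised over register $0$ when $p$ writes to $0$, then move on and be poised over register $1$ when $p$ writes to $1$, and so on. With just one other process the adversary can arrange that every one of $p$'s writes lands in a covered register, so the claim ``each of these $n-s$ registers must be claimed by a distinct perpetual coverer'' is unjustified. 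In such a schedule $(v,\id(p),c)$ never becomes permanent anywhere, and your argument stalls.

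What you are missing is that the algorithm's updates write the entire accumulated $\mathit{View}$, not just the writer's own triple. The paper exploits this \emph{helping} mechanism rather than tracking a specific $p$. Its argument is: eventually each of the first $n$ registers has been updated by some process in $S$; pick a moment when some process is taking a snapshot, so at most $n-1$ registers are covered; hence one of the first $n$ registers is uncovered and, by Lemma~\ref{Lem:PersistentuncoveredWrites}, its content---which includes \emph{some} pending value $(v,\id(q),c)$---is permanent. Now every correct process eventually sees $(v,\id(q),c)$ in a snapshot, adds it to its $\mathit{View}$, and henceforth includes it in every update. So $(v,\id(q),c)$ eventually appears in all of the first $n$ registers, $q$ passes the test at line~\ref{Alg:l:kLF_ValidateWrite}, and $q$ exits---contradicting $q\in S$. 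Note that $q$ need not be the process you chose; the contradiction comes from \emph{some} process terminating, which is exactly what $1$-lock-freedom needs.
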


\begin{proof}
Suppose, by way of contradiction, that Write operations do not satisfy \mbox{$1$-lock-freedom}.
Eventually, all $n$ first registers are infinitely often updated only 
by correct processes unsuccessfully trying to complete a Write operation. 
Thus, eventually each of  the $n$ first registers contain the value
from one of these incomplete Write operations. 
As there are at most $n-1$ covered registers when a snapshot is taken,  
one of these value is eventually permanently present in some 
register~(Lemma~\ref{Lem:PersistentuncoveredWrites}). 
This value is then eventually contained in the local view 
of every correct process, and thus, will eventually be present 
in every update of all the $n$ first registers. 
The correct process with this Write value 
must therefore eventually pass the test 
on line~\ref{Alg:l:kLF_ValidateWrite} and, 
thus, complete its Write operation --- a contradiction. 
\end{proof}

The induction step relies primarily on the helping mechanism.
This mechanism guarantees that a process making progress eventually ensures
that the processes it observes as having a pending operation also make progress 
(the mechanism is similar to the one of the wait-free SWMR memory implementation 
of~\cite{DFGR15}): 

\begin{lemma}
If a process $q$ performing infinitely many operations sees $(v,id(p),c)$, 
and if $p$ is correct, then $p$ eventually completes its~$c^{th}$ Write operation.
\label{Lem:LockFreeHelping}
\end{lemma}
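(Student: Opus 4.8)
The plan is to show that if some process $q$ performing infinitely many operations sees the triple $(v,id(p),c)$ at some point, and $p$ is correct, then $p$ cannot be stuck forever in its $c^{th}$ Write. The key mechanism is the \emph{propagation of views}: once $q$ has $(v,id(p),c)$ in its local $\mathit{View}$, every subsequent update performed by $q$ (line~\ref{Alg:l:kLF_WriteValue}) writes a view containing $(v,id(p),c)$. First I would argue that since $q$ performs infinitely many operations, it executes line~\ref{Alg:l:kLF_TakeSnapshot} and line~\ref{Alg:l:kLF_WriteValue} infinitely often, and therefore, whatever register position $q$ cycles through, there is a register that $q$ updates infinitely often with a view containing $(v,id(p),c)$. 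Using Lemma~\ref{Lem:PersistentuncoveredWrites} in the same spirit as Lemma~\ref{Lem:1lock-Free}: among the $n$ first registers, at any snapshot at most $n-1$ are covered, so eventually one register that carries a view containing $(v,id(p),c)$ becomes permanently uncovered, hence permanently contains such a view.

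Next I would propagate this to $p$ itself. Since $p$ is correct, it keeps taking snapshots at line~\ref{Alg:l:kLF_TakeSnapshot}; eventually every such snapshot sees the permanently-present register and therefore $p$ adds $(v,id(p),c)$ — more precisely, a view containing it — to its own $\mathit{View}$ at line~\ref{Alg:l:kLF_AddSnapshot}. But wait: $p$ is the writer of $(v,id(p),c)$, so $p$ has $(v,id(p),c)$ in its own $\mathit{View}$ from line~\ref{Alg:l:kLF_AddOp} already, and thus every update $p$ performs (line~\ref{Alg:l:kLF_WriteValue}) writes a view containing $(v,id(p),c)$. The point of invoking $q$ is subtler: I need $(v,id(p),c)$ — carried inside some view — to be present in \emph{all} $n$ first registers simultaneously, so that $p$ passes the test on line~\ref{Alg:l:kLF_ValidateWrite}. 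Here I would combine the contribution of $q$ with a symmetric argument: if $p$ never completes, then $p$ is forever looping, and together with $q$ (and any other correct process), the $n$ first registers are updated infinitely often by processes whose views all contain $(v,id(p),c)$ from some point on. Hence eventually every one of the $n$ first registers permanently contains a view that contains $(v,id(p),c)$; at that point $p$'s test on line~\ref{Alg:l:kLF_ValidateWrite} succeeds and $p$ returns — contradiction.

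The main subtlety, and what I expect to be the chief obstacle, is handling \emph{which} registers $p$ and $q$ actually write to: the algorithm writes round-robin over the first $\mathit{WritePosMax}$ registers, and $\mathit{WritePosMax}$ can grow up to $n+k-1$. I must be careful that the persistence argument pins down the \emph{first} $n$ registers specifically, since the test on line~\ref{Alg:l:kLF_ValidateWrite} only requires the value in some $n$ of the $n+k-1$ registers, but the cleanest way is to show the value eventually sits permanently in each of the first $n$. A correct process cycling round-robin hits register index $0$ (and every index up to its current $\mathit{WritePosMax}-1$) infinitely often, so each of the first $n$ registers is updated infinitely often by $p$ itself with a view containing $(v,id(p),c)$. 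The only thing that can prevent persistence is a cover by another process with a stale view; but any process whose pending update carries a view missing $(v,id(p),c)$ took its snapshot before $(v,id(p),c)$ became permanently present, so only finitely many such "bad" updates occur after that time. After the last bad update lands, every one of the first $n$ registers permanently contains a view with $(v,id(p),c)$ (by Lemma~\ref{Lem:PersistentuncoveredWrites} applied register by register, using that each is eventually updated by a good view and never overwritten by a bad one). Then $p$'s next snapshot after this time makes line~\ref{Alg:l:kLF_ValidateWrite} false, and $p$ completes its $c^{th}$ Write.

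\textbf{Remark on the role of $q$.} I would note explicitly that the hypothesis "$q$ sees $(v,id(p),c)$" is what seeds the permanent presence: without any process other than $p$ ever having seen the value, one cannot rule out that $p$'s own writes are repeatedly clobbered. Once $q$ has seen it and $q$ takes infinitely many steps, $q$'s good views get into the shared memory infinitely often, which is what forces the value to become permanently present in an uncovered register and then to propagate everywhere. This is precisely the helping guarantee needed for the induction on $k$ in the progress proof.
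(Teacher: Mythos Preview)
Your overall plan---seed the permanence of $(v,id(p),c)$ in some register via $q$, propagate it to every correct process's view, conclude that all first $n$ registers eventually carry it, and hence $p$ passes the test on line~\ref{Alg:l:kLF_ValidateWrite}---is exactly the structure of the paper's proof. The propagation and termination parts are fine.

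The seeding step, however, has a genuine gap. You argue ``in the same spirit as Lemma~\ref{Lem:1lock-Free}'' that because $q$ updates some register infinitely often with a good view, and at any snapshot at most $n-1$ registers are covered, some register eventually ``becomes permanently uncovered'' and hence permanently carries $(v,id(p),c)$. But that inference does not go through here. Lemma~\ref{Lem:1lock-Free}'s covering argument works only because \emph{all} correct processes are stuck and are all writing their values to the first $n$ registers; it is the fact that every one of those registers eventually holds a stuck value that lets you apply Lemma~\ref{Lem:PersistentuncoveredWrites} to whichever one is uncovered. In your setting, $q$ may be the only process whose view contains $(v,id(p),c)$; nothing prevents other processes---whose snapshots never happened to see $q$'s good view---from repeatedly covering and overwriting the register $q$ just updated, indefinitely. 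Your later sentence ``any process whose pending update carries a view missing $(v,id(p),c)$ took its snapshot before $(v,id(p),c)$ became permanently present'' is then circular: it presupposes the permanence you are trying to establish.

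The paper closes this gap by invoking Lemma~\ref{Lem:PersistentWrites} rather than Lemma~\ref{Lem:PersistentuncoveredWrites} directly. The key point you did not exploit is that $q$ \emph{completes} its Write operations, not merely that it writes infinitely often. When $q$ returns from a Write with value $(v',id(q),c')$, Lemma~\ref{Lem:PersistentWrites} guarantees that $(v',id(q),c')$---and with it $q$'s entire local view, which by hypothesis contains $(v,id(p),c)$---persists in some register from then on. That single application gives the permanence you need, after which your propagation argument goes through verbatim.
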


\begin{proof}
By Lemma~\ref{Lem:PersistentWrites}, if process $q$ returns
from a Write operation with value $(v,id(q),c')$ at time $\tau$, then for any time
$\tau'\geq \tau$ there is a register containing $(v,id(q),c')$. But
note that~$(v,id(q),c')$ is written to a register only associated with
$q$'s local view. Thus, as $q$ completes an infinite number of Write
operations, each local view of $q$ will eventually be forever present
in some register, in particular $(v,id(p),c)$. 
Thus $(v,id(p),c)$ is eventually observed in every
snapshot taken by correct processes, and, therefore, included in their
local view. This implies that it will eventually be present in every register written
infinitely often, in particular in the first $n$ registers. 
As $p$ is correct, it eventually sees $(v,id(p),c)$ in $n$ registers
for the test at line~\ref{Alg:l:kLF_ValidateWrite} and, thus, 
completes its corresponding $c^{th}$ Write operation.
\end{proof}

By the base case provided by
Lemma~\ref{Lem:1lock-Free}  and Lemma~\ref{Lem:LockFreeHelping}, we have:

\begin{lemma}
Write operations in Algorithm~\ref{Alg:k-LockFree} 
satisfy $k$-lock-freedom.\label{Lem:Klock-Free}
\end{lemma}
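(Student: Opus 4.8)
The plan is to argue by induction on $k$, with Lemma~\ref{Lem:1lock-Free} providing the base case $k=1$. So assume $k \geq 2$ and that Write operations are $(k-1)$-lock-free; I want to show they are $k$-lock-free. Fix an infinite execution $\alpha$ in which, for contradiction, fewer than $\min(k, |\correct(\alpha)|)$ correct processes complete infinitely many Write operations; in particular $|\correct(\alpha)| \geq k$ and at most $k-1$ correct processes make progress. By the inductive hypothesis applied to $\alpha$, at least $\min(k-1,|\correct(\alpha)|) = k-1$ correct processes do complete infinitely many operations, so \emph{exactly} $k-1$ correct processes make progress and there is at least one correct process $p$ that is blocked forever inside a single Write operation $(v,\mathit{id}(p),c)$.

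Next I would track the value of $\mathit{WritePosMax}$ at the blocked process $p$. Since $p$ loops forever, by Lemma~\ref{Lem:LockFreeHelping} (contrapositive) no process performing infinitely many operations ever sees $(v,\mathit{id}(p),c)$; but $p$ keeps updating registers with its own view, and by Lemma~\ref{Lem:PersistentuncoveredWrites} any write landing in a register uncovered at the moment of the enclosing snapshot persists. The key sub-claim is that $p$ eventually observes at least $k-1$ distinct processes other than itself as members of $\mathit{ActiveProcs}$ — these are the $k-1$ correct processes that make progress, each of which keeps producing fresh operations with strictly increasing counters, so each is eventually added to $p$'s $\mathit{ActiveProcs}$ set via line~\ref{Alg:l:kLF_IncrMaxWritePos}'s feeding condition. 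Hence $p$ eventually sets $\mathit{WritePosMax} = \min(n + |\mathit{ActiveProcs}| - 1, n+k-1) = n+k-1$, so $p$ eventually writes its view to \emph{all} $n+k-1$ registers in round-robin fashion, infinitely often.

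Now I derive the contradiction. Consider the long run after $p$'s $\mathit{WritePosMax}$ has stabilized at $n+k-1$ and after each of the $k-1$ progressing correct processes has, via Lemma~\ref{Lem:PersistentWrites}, a view permanently lodged in some register; in particular each of them has permanently seen $(v,\mathit{id}(p),c)$ (since $p$ writes it to every register repeatedly, and a progressing process takes infinitely many snapshots, so one of its snapshots catches $(v,\mathit{id}(p),c)$ in an uncovered register, which then persists and enters that process's view). Then Lemma~\ref{Lem:LockFreeHelping} with $q$ ranging over the $k-1$ progressing correct processes and $p$ correct gives that $p$ eventually completes its $c^{th}$ Write — contradicting that $p$ is blocked forever.

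\textbf{Main obstacle.} The delicate point is the bookkeeping showing that the blocked process $p$ truly reaches $\mathit{WritePosMax} = n+k-1$: one must verify that the $k-1$ progressing correct processes really do get counted in $p$'s $\mathit{ActiveProcs}$, i.e., that the "fresh counter" detection on line~\ref{Alg:l:kLF_IncrMaxWritePos} fires for each of them, and that $\mathit{ActiveProcs}$ is monotone (never shrinks), so these contributions accumulate rather than get overwritten. A secondary subtlety is ruling out the degenerate case where the would-be-helping processes are blocked on \emph{Collect} rather than Write; but since the claim is only about Write progress and the $k-1$ processes are assumed to complete infinitely many operations, one should be careful to phrase Lemma~\ref{Lem:LockFreeHelping}'s hypothesis ("performing infinitely many operations") so it is genuinely met — this is exactly why the induction is on the number of progressing processes rather than directly on completed Writes.
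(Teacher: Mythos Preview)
Your induction setup and the use of Lemma~\ref{Lem:LockFreeHelping} are the right ingredients, but the argument has a real gap at the step ``$p$ writes to every register repeatedly, so one of $q$'s snapshots catches $(v,\mathit{id}(p),c)$.'' Nothing you wrote rules out the following adversarial schedule: a single other process $q_1$ takes a snapshot (not containing $p$'s value), then $p$ writes to register $r$, then $q_1$ overwrites $r$, then $q_1$ takes a fresh snapshot (still not containing $p$'s value, since it was just erased), and this repeats in lockstep with $p$'s round-robin. Your appeal to Lemma~\ref{Lem:PersistentuncoveredWrites} does not help, because you never establish that $p$'s write lands in a register that is uncovered \emph{at that moment}; the covering pattern is dynamic, and an adversary can keep the one register $p$ is about to write covered.

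The paper closes this gap with an observation you are missing: the $k-1$ \emph{progressing} processes eventually see no new values from outside their own group (by the contrapositive of Lemma~\ref{Lem:LockFreeHelping}), so their $\mathit{ActiveProcs}$ eventually contains at most $k-1$ processes, hence their $\mathit{WritePosMax}$ is at most $n+k-2$, and they \emph{stop writing to the last register}. That last register therefore eventually holds only views written by blocked processes, and any later snapshot by a progressing process reads some blocked process's pending triple there, triggering Lemma~\ref{Lem:LockFreeHelping}. In other words, the crucial $\mathit{ActiveProcs}$ bookkeeping is not the one you flagged (for the blocked process $p$) but the dual one, bounding $\mathit{ActiveProcs}$ for the \emph{progressing} processes from above.
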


\begin{proof}
  We proceed by induction on $k$, starting with the base case of $k=1$ (Lemma~\ref{Lem:1lock-Free}). 
  Suppose that Write operations satisfy $\ell$-lock-freedom for some  $\ell< k$. 
Consider a run in which at least $\ell+1$ processes are correct, 
but only $\ell$  of them make progress 
(if  such a run doesn't exist, the algorithm satisfies
$(\ell+1)$-lock-freedom).
In this run, at least one correct process is eventually blocked in a Write operation. 
According to Lemma~\ref{Lem:LockFreeHelping}, 
the $\ell$ processes performing infinitely many Write operations 
eventually do not observe new values written by other processes. 
By the algorithm, these processes eventually never write to the last~$k-\ell>0$ registers. 

A correct process that never completes a Write operation 
will execute the while loop 
\mbox{(lines~\ref{Alg:l:kLF_WhileStart}--\ref{Alg:l:kLF_WhileEnd})}
infinitely many times, 
and thus, will infinitely often take a snapshot and 
update its local view~(line~\ref{Alg:l:kLF_TakeSnapshot}). 
In particular, it will eventually observe a new Write operation 
performed by each of the $\ell$ processes completing 
infinitely many Write operations. 
It will then eventually include at least $\ell+1$ processes 
in its set of active processes (i.e., the $\ell$ processes performing 
infinitely many Write operations and~itself). 
It will therefore eventually write to the $(n+\ell)^{th}$ register 
infinitely often.  
In the considered run, this register is written infinitely often only by correct processes 
which do not complete new Write operations.
The value from at least one of such process will then be observed by the $\ell$ processes 
making progress. 
By Lemma~\ref{Lem:LockFreeHelping}, 
this process 
will eventually complete its Write operation --- a contradiction.
\end{proof}

Collect operations in Algorithm~\ref{Alg:k-LockFree} clearly satisfy wait-freedom 
as there are no loops and MWMR snapshot operations are wait-free. 
Thus Lemma~\ref{Lem:Klock-Free} and the wait-freedom 
of Collect operations imply~that:

\begin{theorem}
Algorithm~\ref{Alg:k-LockFree} is a  $k$-lock-free 
implementation of an SWMR memory for $n$ processes using $n+k-1$ MWMR registers.
\end{theorem}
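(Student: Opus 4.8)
The final theorem simply bundles together two already-established facts: the safety of Algorithm~\ref{Alg:k-LockFree} (proved in the preceding theorem via Lemma~\ref{Lem:PersistentWrites}) and the $k$-lock-freedom of the progress guarantee. So the plan is to assemble a short composite argument: first invoke the safety theorem to conclude that the emulated Collect/Write pair meets the SWMR specification of Section~\ref{subsec:safety}, then invoke Lemma~\ref{Lem:Klock-Free} for the Write operations together with the wait-freedom of Collect to conclude the $k$-lock-freedom of the whole implementation. Finally, note the register count is $n+k-1$ by inspection of the algorithm (the bound $\mathit{WritePosMax}\le n+k-1$ in line~\ref{Alg:l:kLF_IncrMaxWritePos} means no register with index above $n+k-1$ is ever accessed, and the atomic-snapshot memory $\mathit{MEM}$ over $n+k-1$ components is itself implementable from $n+k-1$ MWMR registers in the comparison-based model, as recalled in Section~\ref{sec:prel}).

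The one genuinely non-routine point is reconciling a progress condition stated for \emph{Write} operations with a progress condition for \emph{all} high-level operations. Here I would argue as follows: in any infinite execution $\alpha$, a correct process $p$ that never completes a high-level operation must be stuck in a single Write (since Collect is loop-free and MWMR snapshots are wait-free, every invoked Collect returns). Hence the set of correct processes that complete infinitely many operations coincides, up to the trivial case analysis, with the set of correct processes that make progress on Writes. If at most $k$ processes are correct, Lemma~\ref{Lem:Klock-Free} (in its form covering $\min(k,|\mathit{Correct}(\alpha)|)$) forces every correct process to keep completing Writes; interleaving each completed Write with its pending Collects (again wait-free) gives that every correct process completes infinitely many high-level operations, i.e.\ $k$-obstruction-freedom is subsumed. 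If more than $k$ are correct, Lemma~\ref{Lem:Klock-Free} still guarantees at least $k$ of them complete infinitely many Writes, hence infinitely many high-level operations, which is exactly $k$-lock-freedom.

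I expect the main obstacle to be purely expository rather than mathematical: making precise that ``making progress'' in the sense of Section~\ref{subsec:progress} (completing infinitely many Writes \emph{and} Collects) is not weaker than completing infinitely many Writes alone. The subtle case is a correct process that happens never to invoke a Collect — but since the SWMR interface is driven by the application and the liveness requirement only concerns operations that are actually invoked, this is handled by the observation that an idle process (one that stops invoking operations) is not required to make progress, while an active one either loops forever inside one Write (contradicting Lemma~\ref{Lem:Klock-Free} for the appropriate count) or keeps returning from both kinds of operations. Once this is spelled out, the proof is a two-line concatenation of the safety theorem and Lemma~\ref{Lem:Klock-Free}.
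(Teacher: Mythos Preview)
Your proposal is correct and follows essentially the same approach as the paper: the paper's own proof is a single sentence observing that Collect is wait-free (no loops, wait-free snapshots) and that this together with Lemma~\ref{Lem:Klock-Free} yields the theorem. Your version is actually more careful than the paper in spelling out why $k$-lock-freedom of Writes plus wait-freedom of Collects implies $k$-lock-freedom of the whole implementation, and in accounting explicitly for the register count via the $\mathit{WritePosMax}$ bound and the snapshot implementation from Section~\ref{sec:prel}; the paper leaves both of these implicit.
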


\section{Lower bound:  impossibility of 2-obstruction-free SWMR memory implementations with n MWMR registers}
\label{sec:lb}

The algorithm in Section~\ref{sec:alg} gives an upper bound of $n+k-1$
on the number of MWMR registers required to implement an SWMR memory
satisfying the $k$-lock-free progress condition in the comparison-based model.
In this section, we present a lower bound on the number of MWMR registers required 
in order to provide a $2$-obstruction-free, and hence also a $2$-lock-free, SWMR memory
implementation. 

\subsection{Overview of the lower bound}

Our proof relies on the concepts of {\it covering} and {\it indistinguishability}.

A register is \emph{covered} at a given point of a run if there is at least
one process poised to write to it (we say that the process covers the
register).  Hence, a covered register cannot be used to ensure
persistence of written data: by awakening the covering
process, the adversarial scheduler can overwrite it.
This property alone can be used to show that $n$ registers are required for 
an \mbox{obstruction-free} (and hence also for a $1$-lock-free) SWMR memory
implementation~\cite{BL83}, but not to obtain a lower bound of 
\emph{more} than $n$ shared resources as there is always one which 
remains~uncovered. 

Indistinguishability captures bounds on the knowledge that a process has 
of the rest of the system.
Two system states are \emph{indistinguishable} for a process if it has the same
local state in both states and if the shared memory includes the same content.
Thus, in an SWMR memory implementation, a Write operation can safely terminate
only if, in all indistinguishable states, its value is present in a
register that is not covered (by a process unaware of that value).

In our proof, we work with a composed notion of 
covering and indistinguishability.
The idea is to show that there is
a large set of reachable system states, 
indistinguishable to a given process~$p$,
in which different \emph{sets of registers} are covered.
Intuitively, if a set of registers is covered in one of these indistinguishable  
states,~$p$ must necessarily write to a register outside of this 
set in order to 
complete a new Write operation.
Hence, if such indistinguishable states exist for \emph{all} register
subsets, then $p$ must write its value to \emph{all} registers.
To perform  infinitely many high-level Write operations,~$p$ must
then write infinitely often to all available registers.
But then any other process $p'$ taking steps can be masked by the execution of $p$ (i.e., 
any write $p'$ makes to a MWMR register can be scheduled to be overwritten by $p$). 
This way we establish that no $2$-obstruction free implementation
exists, as it requires that 
at least two processes
must be able to make
progress concurrently.

\subsection{Preliminaries} 

Assume, by contradiction, that there exists 
a $2$-obstruction-free SWMR implementation using only~$n$ registers. 
To establish a contradiction, we consider a set of runs by a fixed set
$\Pi$ of $n$ processes in which every
process performs infinitely many Write operations with monotonically
increasing arguments.
Let $\mathcal{R}$ denote the set of $n$ available registers.

\paragraph*{Indistinguishability}

A configuration $C$ is said to be \emph{indistinguishable} from a configuration $C'$ 
for a set of processes~$P$, if the content of all registers and the states
of all processes in $P$ are identical in~$C$ and $C'$.
Given a set of configurations $\cal D$, let $I({\cal D},P)$ denote that 
any two configurations from~$\D$ are indistinguishable for $P$. 

We say that an execution is $P$-only, for a set of processes $P$, 
if it consists only of steps by processes in $P$.
We say that a set of processes $P$ is \emph{hidden} in an execution
$\alpha$ if all writes in~$\alpha$ performed by processes in $P$
are overwritten by some processes not in $P$, 
without any read performed by processes not from $P$ in between.
Given a sequence of steps $\alpha$ and a set of processes~$P$, 
let $\alpha|_P$ be the sub-sequence of $\alpha$ 
containing only the steps from processes in $P$.
Let us denote as $\D\alpha$ the set of all configurations 
reached by applying $\alpha$ to all configurations in~$\D$
(note that $\alpha$ must be applicable to all configurations in~$\D$).

\begin{observation}
  If a $P$-only execution $\alpha$ is applicable to a configuration $C$ from a set of 
  configurations $\cal D$ indistinguishable for $P$, i.e., $C\in{\cal D}$ and $I({\cal D},P)$,
  then $\alpha$ is applicable to any configuration~$C'\in {\cal D}$,  
  and it maintains the indistinguishability of configurations for $P$, i.e.,
	 $I(\D\alpha,P)$.
	 \label{Obs:P-only}
\end{observation}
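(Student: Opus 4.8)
The plan is to prove the statement by a straightforward induction on the length of the $P$-only execution $\alpha$. The base case $|\alpha|=0$ is immediate: the empty execution is applicable to every configuration, $\mathcal{D}\alpha=\mathcal{D}$, and $I(\mathcal{D},P)$ holds by assumption. For the inductive step, write $\alpha=\beta e$ where $e$ is a single step of some process $p\in P$ (such a $p$ exists since $\alpha$ is $P$-only). By the induction hypothesis applied to $\beta$, the prefix $\beta$ is applicable to every configuration of $\mathcal{D}$ and $I(\mathcal{D}\beta,P)$ holds. Since $e$ is applicable to $C\beta$, it is exactly the pending step of $p$ in $C\beta$; because $p\in P$ and $I(\mathcal{D}\beta,P)$, process $p$ has the same local state --- hence the same pending step --- in every configuration of $\mathcal{D}\beta$. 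Therefore $e$ is applicable to each such configuration, and $\alpha=\beta e$ is applicable to every configuration of $\mathcal{D}$.

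It then remains to check that applying $e$ preserves indistinguishability for $P$, which amounts to examining the two possible kinds of step. If $e$ is a read of a register $r$, then all configurations in $\mathcal{D}\beta$ agree on the content of $r$ (they agree on the content of every register), so the read returns the same value in each; consequently $p$ transitions to the same new local state in each, register contents are unchanged, and the local states of the other processes in $P$ are untouched --- so the resulting configurations remain indistinguishable for $P$. If $e$ is a write of a value $v$ to a register $r$, then $v$ and $r$ are determined by $p$'s local state, which is common to all configurations of $\mathcal{D}\beta$; hence in each of them the same $v$ is written to the same $r$, all register contents stay identical across the resulting configurations, $p$'s new local state is the same in each, and the states of the remaining processes in $P$ are unchanged. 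In both cases $I(\mathcal{D}\beta e,P)$ holds, completing the induction.

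The only point requiring a word of care is non-determinism: since the transition function may be non-deterministic, $Ce$ is not literally a unique configuration. As is done implicitly throughout the paper, I would treat a step $e$ --- together with its outcome (the value read, for a read step) --- as fully determining $p$'s next local state, so that once the read value is pinned down by indistinguishability the successor configuration is well-defined and the argument above goes through verbatim. Beyond this bookkeeping, there is no real obstacle here; the observation is a routine indistinguishability lemma whose whole content is that $P$-only steps can only reveal, and can only depend on, information already shared across $\mathcal{D}$.
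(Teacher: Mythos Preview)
Your proof is correct and is exactly the standard induction argument one expects for such an indistinguishability lemma. The paper itself does not give a proof of this statement at all: it is labeled an \emph{Observation} and left unargued, so there is nothing to compare against beyond noting that your write-up makes explicit what the paper takes as evident. Your remark on non-determinism is apt; the paper's notation $Ce$ implicitly assumes that a ``step'' carries enough information (including any non-deterministic choice) to determine the successor configuration, which is precisely the convention you adopt.
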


A similar observation can be made concerning hidden executions:

\begin{observation}
Given an execution $\alpha$ applicable to $C$, with $C$ 
from a set of configurations~$\cal D$ indistinguishable for P. 
If processes in $\Pi\setminus P$ are hidden in $\alpha$, 
then $\alpha|_P$ is applicable to any~$C'\in {\cal D}$, 
and $I((\D\alpha|_P)\cup\{C\alpha\},P)$.
\label{Obs:hidden}
\end{observation}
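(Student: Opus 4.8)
The plan is to reduce the claim to a single indistinguishability statement about $P$ and then invoke Observation~\ref{Obs:P-only}. Precisely, I would establish that (i) $\alpha|_P$ is applicable to $C$, and (ii) $C\alpha$ is indistinguishable for $P$ from $C(\alpha|_P)$. Granting (i), the sequence $\alpha|_P$ is a $P$-only execution applicable to $C\in\D$ with $I(\D,P)$, so Observation~\ref{Obs:P-only} yields that $\alpha|_P$ is applicable to every $C'\in\D$ and that $I(\D\alpha|_P,P)$. Since $C(\alpha|_P)\in\D\alpha|_P$, claim (ii) then lets us adjoin $C\alpha$ to this class, giving $I((\D\alpha|_P)\cup\{C\alpha\},P)$, as required.

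Both (i) and (ii) follow from one induction along the prefixes of $\alpha$, run ``in lockstep'' with $\alpha|_P$ on the steps of $P$. Writing $\beta$ for a prefix of $\alpha$, the invariant I would carry is: (a) every process in $P$ has the same local state in $C\beta$ and $C(\beta|_P)$, and (b) every register $r$ holds the same value in $C\beta$ and $C(\beta|_P)$, unless the value of $r$ in $C\beta$ was produced by a write of some process in $\Pi\setminus P$ which --- by the hypothesis that $\Pi\setminus P$ is hidden in $\alpha$ --- is overwritten later in $\alpha$ by a process of $P$ without any intervening read of $r$ by a process of $P$. The one-step case analysis is routine: a step of a process in $\Pi\setminus P$ leaves all $P$-states untouched and, if it writes, only installs such a ``hidden'' value; a read of $r$ by $p\in P$ cannot return a hidden value (that would be a $P$-read between a $(\Pi\setminus P)$-write and its overwrite), so $C\beta$ and $C(\beta|_P)$ agree on $r$ and $p$ receives the same response in both executions, preserving (a); a write by $p\in P$ installs the same value in both (its state matches) and occupies the corresponding position in $\alpha|_P$. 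Applicability of $\alpha|_P$ to $C$ comes for free from (a): matching $P$-states give matching pending steps, and every $P$-step of $\alpha$ is applicable to its configuration.

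To get (ii) I would instantiate the invariant at $\beta=\alpha$: part (a) supplies the $P$-states, and for the registers only the ``hidden value'' alternative in (b) needs to be excluded at the final configuration. But a $(\Pi\setminus P)$-write that is never overwritten in $\alpha$ contradicts $\Pi\setminus P$ being hidden in $\alpha$; hence the last write to every register in $\alpha$ (if any) is a step of $P$, and each register therefore carries the same value in $C\alpha$ and $C(\alpha|_P)$. I expect the only real difficulty to be bookkeeping rather than insight: intermediate register contents genuinely differ between $C\beta$ and $C(\beta|_P)$, so the invariant must pin down exactly which registers hold currently ``invisible'' stale values and certify both that no process of $P$ ever reads them and that they are all flushed by the end of $\alpha$.
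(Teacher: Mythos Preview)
The paper states Observation~\ref{Obs:hidden} without proof, so there is no argument of the paper's to compare against. Your plan is correct and is the natural justification: reducing to Observation~\ref{Obs:P-only} via the two auxiliary claims (i) $\alpha|_P$ is applicable to $C$ and (ii) $I(\{C\alpha,C(\alpha|_P)\},P)$, and establishing both by a prefix induction whose invariant tracks exactly which registers currently carry a ``hidden'' $(\Pi\setminus P)$-value, is the standard way to unpack such an observation. The case analysis you outline is sound; in particular, the crucial point that no $P$-read can see a hidden value follows directly from the paper's definition of ``hidden'', and your final-step argument that no hidden value survives to $C\alpha$ is exactly what the definition guarantees. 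One minor remark: the paper allows the process automata to be non-deterministic, so when you say ``matching $P$-states give matching pending steps'' and ``$p$ receives the same response, preserving (a)'', you are implicitly assuming that a step $e$ in $\alpha$ records enough information (the non-deterministic choice) that replaying it from an identical local state yields an identical successor state; this is the standard reading needed already for Observation~\ref{Obs:P-only}, so it is harmless, but worth being aware of.
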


\paragraph*{Coverings and confusion} 

We say that a set of processes $P$ \emph{covers}
a set of registers $R$ in some configuration $C$,
if for each register $r\in R$,
there is a process $p\in P$ such that the next step 
of $p$ in~$C$ is a write on $r$  (the predicate is denoted $\mathit{Cover}(R,P,C)$).

Our lower bound result relies
on a concept that
we call \emph{confusion}.
We say that a set of processes $P$ are \emph{confused} on a set of registers $S$ 
in a set of reachable configurations $\cal D$, 
denoted~$\mathit{Confused}(P,S,\cal D)$, if and only if:
\begin{enumerate}
\item $I(\mathcal{D},P)$.
\item $|S|+|P|=n+1$.
\item For any process $p\in\Pi\setminus P$, 
there exist two registers $r_p,r_p'\in S$ such that, for any
configuration~$D\in\cal D$, there exists $D'\in \cal D$, such that
$p$ covers $r_p$ in $D$ and $r_p'$ in $D'$, or vice versa, 
and~$D$ and $D'$ are indistinguishable to all other processes:
\[
	\forall p\in \Pi\setminus P,\exists r_p,r_p' \in S,
	\forall D\in{\cal D},\exists r \in \{r_p,r_p'\}:
	\]\[
	 \mathit{Cover}(\{r\},\{p\},D)
	\wedge(\exists D'\in {\cal D}, I(\{D,D'\},\Pi\setminus\{p\})\wedge \mathit{Cover}(\{r_p,r_p'\}\setminus\{r\},\{p\},D')){}.
\]
\item For any strict subset $R$ of $S$, there exists $D\in\mathcal{D}$ such that 
	$R$ is covered by $\Pi\setminus P$ in $D$:
\[
	\forall R\subsetneq S, \exists D\in{\cal D}: \mathit{Cover}(R,\Pi\setminus P, D){}.
\]
\end{enumerate}

Intuitively, processes in $P$ are confused on $S$ in $\cal D$,
if $\cal D$ is a set of indistinguishable configurations for $P$,
such that any \emph{strict} subset of $S$ is covered by $\Pi\setminus P$
in some configuration of $\cal D$ (Conditions~$1$ and $4$).
We require that as much processes are confused as possible (Condition $2$).
Additionally, the property must hold for a set of configurations $\D$
in which processes not in $P$ may cover only one out of 2 given registers, 
and may be cover them independently of other processes states in~$\D$~(Condition~$3$).

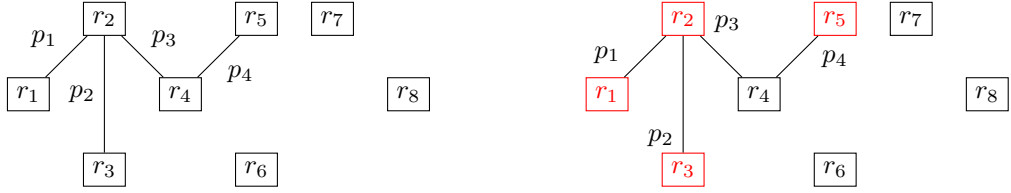
\begin{figure}
\begin{center}
\begin{minipage}{0.49\textwidth}
\begin{center}
\begin{tikzpicture}
  \node[draw] (A) at (1,2) {$r_1$};
  \node[draw] (B) at (2,3) {$r_2$};
  \node[draw] (C) at (2,1) {$r_3$};
  \node[draw] (D) at (3,2) {$r_4$};
  \node[draw] (E) at (4,3) {$r_5$};
  \node[draw] (F) at (4,1) {$r_6$};
  \node[draw] (G) at (5,3) {$r_7$};
  \node[draw] (H) at (6,2) {$r_8$};
  \draw (A) -- (B) node[midway,above left] {$p_1$};
  \draw (B) -- (C) node[midway,left] {$p_2$};
  \draw (B) -- (D) node[midway,above right] {$p_3$};
  \draw (D) -- (E) node[midway,below right] {$p_4$};
\end{tikzpicture}
\end{center}
\end{minipage}
\hfill
\begin{minipage}{0.49\textwidth}
\begin{center}
\begin{tikzpicture}
  \node[draw,red] (A) at (1,2) {$r_1$};
  \node[draw,red] (B) at (2,3) {$r_2$};
  \node[draw,red] (C) at (2,1) {$r_3$};
  \node[draw] (D) at (3,2) {$r_4$};
  \node[draw,red] (E) at (4,3) {$r_5$};
  \node[draw] (F) at (4,1) {$r_6$};
  \node[draw] (G) at (5,3) {$r_7$};
  \node[draw] (H) at (6,2) {$r_8$};
  \draw (A) -- (B) node[very near start,above left] {$p_1$};
  \draw (B) -- (C) node[very near end,left] {$p_2$};
  \draw (B) -- (D) node[very near start,above right] {$p_3$};
  \draw (D) -- (E) node[very near end,below right] {$p_4$};
\end{tikzpicture}
\end{center}
\end{minipage}
\caption{Processes $\{p_5,p_6,p_7,p_8\}$ are confused on
 registers $\{r_1,r_2,r_3,r_4,r_5\}$; an example of a possible
  covering is given on the right.\label{Fig:confusion}}
\end{center}
\end{figure}

In Figure~\ref{Fig:confusion}, we give an example of a confusing
set of configuration $\cal D$ for $8$ processes and~$8$ registers. 
Processes $\{p_5,p_6,p_7,p_8\}$ are confused on registers  
$\{r_1,r_2,r_3,r_4,r_5\}$. Registers are 
represented as nodes, and pairs of registers that a process might be 
covering are represented as edges.
The set of indistinguishable configurations $\cal D$ for $\{p_5,p_6,p_7,p_8\}$ 
are defined via composition of states for $p_1$, $p_2$, $p_3$ and
$p_4$ in which they, respectively, cover registers in $\{r_1,r_2\}$, $\{r_2,r_3\}$, 
$\{r_2,r_4\}$ and~$\{r_4,r_5\}$. An example of a covering of $\{r_1,r_2,r_3,r_5\}$
for some particular execution is presented on the right side of Figure~\ref{Fig:confusion}.

First, we are going to provide an alternative property for Condition
$4$ of the definition of $\mathit{Confused}(P,S,\D)$. 
The idea is that, given $(P,S,\D)$ satisfying Conditions $1$, $2$ and $3$, 
Condition $4$ is satisfied if and only if the graph induced by the sets of 
registers that may be covered by processes in $\Pi\setminus P$ 
(as represented in Figure~\ref{Fig:confusion}) forms a connected component over~$S$. 
More formally, that Condition $4$ is satisfied if and only if, for any partition of $S$ 
into two non-empty subsets $S_1$ and $S_2$, there is a process in $\Pi\setminus P$ 
for which the set of two registers it may be covering in $\cal D$ intersects 
with both $S_1$ and $S_2$:

\begin{lemma}
$\forall P\subseteq \Pi, \forall S\subseteq \mathcal{R}, 
\forall {\cal D}\subseteq \mathit{Reach}$ satisfying Conditions $1$, $2$ and $3$ of the 
confusion definition, we have $\forall R\subsetneq S, \exists D\in{\cal D}: \mathit{Cover}(R,\Pi\setminus P, D)$ if and only if:
\[\forall S_1,S_2\subseteq S, (S_1\neq\emptyset\wedge S_2\neq\emptyset \wedge 
S_1\cup S_2= S\wedge S_1\cap S_2 =\emptyset): 
\]\[
\exists r_1\in S_1, r_2\in S_2, p\in \Pi\setminus P, D_1,D_2\in {\cal D}:(\mathit{Cover}(\{r_1\},\{p\},D_1)\wedge\mathit{Cover}\left((\{r_2\},\{p\},D_2)\right){}.
\]
\label{lem:interCharac}
\end{lemma}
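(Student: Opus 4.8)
\textbf{Proof plan for Lemma~\ref{lem:interCharac}.}

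The plan is to prove the equivalence by interpreting both sides in terms of the graph $G$ on vertex set $S$ whose edges are, for each process $p\in\Pi\setminus P$, the pair $\{r_p,r_p'\}$ of registers that $p$ may cover in $\cal D$ (these pairs are well-defined by Condition~$3$). The right-hand side of the statement is precisely the assertion that $G$ is connected: for every partition of $S$ into nonempty $S_1,S_2$ some edge crosses the cut. The left-hand side (Condition~$4$) says that every strict subset $R\subsetneq S$ is simultaneously covered by $\Pi\setminus P$ in some single configuration $D\in\cal D$. So I want to show: Condition~$4 \iff G$ connected.

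\emph{($\Leftarrow$, the substantive direction).} Assume $G$ is connected and fix a strict subset $R\subsetneq S$. I need to exhibit one configuration $D\in\cal D$ in which every register of $R$ is covered by a distinct process of $\Pi\setminus P$. First I would record the counting facts: by Condition~$2$, $|\Pi\setminus P| = n - |P| = |S| - 1$, so $G$ has $|S|-1$ edges on $|S|$ vertices, and being connected it is a \emph{tree}. Pick any vertex $r^\ast \in S\setminus R$ (nonempty since $R$ is strict) and root the tree at $r^\ast$. Orient each edge toward its endpoint farther from the root; since every non-root vertex has a unique parent edge, this gives an injective map sending each register $r\in S\setminus\{r^\ast\}$ to the process whose pair realizes that edge, with $r$ the child endpoint. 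In particular each $r\in R\subseteq S\setminus\{r^\ast\}$ is matched to a distinct process $p_r\in\Pi\setminus P$ with $r\in\{r_{p_r},r_{p_r}'\}$. Now I must produce a \emph{single} $D\in\cal D$ in which each such $p_r$ covers its assigned $r$. This is where Condition~$3$ does the work: it guarantees that for each $p\notin P$ and each $D\in\cal D$, the choice of which of its two registers $p$ covers can be flipped while keeping all \emph{other} processes' states fixed, staying inside $\cal D$. Starting from any $D_0\in\cal D$ and applying these flips one process at a time — for each matched $p_r$, flip (if necessary) so that $p_r$ covers $r$ — each flip preserves membership in $\cal D$ and does not disturb the choices already made for other processes, so after finitely many flips we reach $D\in\cal D$ with $\mathit{Cover}(R,\Pi\setminus P,D)$. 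This establishes Condition~$4$.

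\emph{($\Rightarrow$).} Assume $G$ is disconnected; let $S_1$ be one connected component of $G$ and $S_2 = S\setminus S_1$, both nonempty, with no edge crossing. Take $R = S_1$, which is a strict subset of $S$. In any $D\in\cal D$, a process $p\in\Pi\setminus P$ can cover a register of $S_1$ only if its pair $\{r_p,r_p'\}$ meets $S_1$; since no edge crosses the cut, such a $p$ has both registers inside $S_1$. The number of processes whose pair lies inside $S_1$ is exactly the number of edges of $G$ induced on $S_1$, which is at most $|S_1|-1$ (a graph on $|S_1|$ vertices with no more, as $G$ is a forest — or simply because it is the edge set of the component minus possibly more). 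Hence in no configuration can all $|S_1| = |R|$ registers of $R$ be simultaneously covered by distinct processes of $\Pi\setminus P$, contradicting Condition~$4$. So Condition~$4$ implies $G$ connected, i.e.\ implies the right-hand side.

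The main obstacle is the $(\Leftarrow)$ direction: turning the abstract connectivity of $G$ into an \emph{actual reachable configuration} in $\cal D$ where a prescribed matching of processes-to-registers is realized. The counting (tree structure, injective parent map) is routine; the delicate point is that Condition~$3$ must be invoked exactly in the form "each process's covering choice is independent of the others within $\cal D$", so that the flips compose without conflict — I would be careful to phrase this as a finite induction on the matched processes, each step preserving $I(\cdot,P)$ and the already-fixed coverings. One should also double-check the edge-cases $|S|=1$ (then $R$ must be $\emptyset$ and the statement is vacuous on both sides) and confirm $|\Pi\setminus P| = |S|-1$ is used consistently.
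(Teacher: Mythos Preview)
Your $(\Leftarrow)$ direction is correct and is the paper's argument recast in cleaner graph language. The paper grows a set $S_0\supseteq S\setminus R$ one register at a time, at each step picking a process straddling the current cut $(S_i,S\setminus S_i)$ and restricting $\cal D$ to the configurations where that process covers the register on the $S\setminus S_i$ side; you instead observe directly that a connected graph on $|S|$ vertices with $|S|-1$ edges is a spanning tree, root it at some $r^\ast\in S\setminus R$, and read off the process--register assignment from parent edges. Both arguments then use Condition~3 in exactly the same way (each ``flip'' stays in $\cal D$ and leaves all other processes' states untouched), so your inductive composition of flips is fine. Your formulation is arguably more transparent, since it names the combinatorial structure explicitly.

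Your $(\Rightarrow)$ direction has a real gap. You let $S_1$ be an arbitrary connected component and assert that the number of edges induced on $S_1$ is at most $|S_1|-1$, citing ``$G$ is a forest''. But in this direction you have \emph{not} assumed $G$ is connected, and a disconnected graph on $|S|$ vertices with $|S|-1$ edges need not be a forest: take $|S|=4$ with a triangle on three vertices and one isolated vertex; the triangle component has $3\ge |S_1|$ internal edges and your pigeonhole step fails there. The repair is immediate: since the components together have $|S|-1$ edges and each contributes at least $(\text{vertices}-1)$, at least one component is a tree --- take \emph{that} one as $S_1$. The paper sidesteps this by arguing from an arbitrary non-crossed partition $(S_1,S_2)$: split $\Pi\setminus P$ into $Q_1\cup Q_2$ according to which side each process's pair lies in, note $|Q_1|+|Q_2|=|S_1|+|S_2|-1$, so some $|Q_i|<|S_i|$, and then $R=S\setminus\{r\}$ for any $r$ on the other side witnesses the failure of Condition~4.
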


\begin{proof}
Let us fix some $P\subseteq \Pi$, $S\subseteq \mathcal{R}$, 
and ${\cal D}\subseteq \mathit{Reach}$ satisfying 
Conditions $1$, $2$ and $3$ of the confusion definition.

First, let us assume that Condition $4$ is also satisfied 
and consider any partition of $S$ into non-empty subsets $S_1$ and $S_2$
(i.e., $S_1\neq\emptyset$, $S_2\neq\emptyset$, $S_1\cap S_2=\emptyset$ and $S_1\cup S_2=S$).
Assume now that there does not exist any process $p\in\Pi\setminus P$ such that $p$ 
might be covering a register from~$S_1$ or a register from $S_2$ in $\cal D$. 
This implies that processes in $\Pi\setminus P$ can be partitioned 
into two subsets~$Q_1$ and $Q_2$ (with $Q_1\cap Q_2=\emptyset$ and $Q_1\cup Q_2=\Pi\setminus P$)
such that processes in $Q_1$, respectively $Q_2$, may cover registers 
from $S_1$, respectively $S_2$, in $\cal D$.
By construction of the partitions, we have $|Q_1|+|Q_2|= |\Pi\setminus P|$ 
and $|S_1|+|S_2|=|S|$. Using the fact that Condition $2$ is satisfied by $P$ 
and $S$ we obtain from $|S|+|P|= n+1$ that $|S_1|+|S_2|+(n-(|Q_1|+|Q_2|))=n+1$, 
and thus, that $|S_1|+|S_2|=|Q_1|+|Q_2|+1$. 
This implies that either $|Q_1|<|S_1|$ or $|Q_2|<|S_2|$, w.l.o.g., let~$|Q_1|<|R_1|$. 
Now consider $r\in S_2$, $S\setminus\{r\}$ is a
strict subset of $S$, and therefore Condition $4$ implies that there exists 
$D\in\mathcal{D}$ such that $\mathit{Cover}(S\setminus\{r\},\Pi\setminus P, D)$.
As registers in $S_1$ can only be covered by processes from $Q_1$, then we have
$\mathit{Cover}(S_1,Q_1, D)$. Recall that, by the pigeonhole principle, 
a set of processes cannot cover more registers than processes it contains. 
But $|Q_1|<|R_1|$ --- a contradiction.

Now let us assume that given any partition of $S$ into 
non-empty subsets $S_1$ and $S_2$, there exists a process $p\in \Pi\setminus P$
such that $p$ might be covering a register in $S_1$ or a register in $S_2$ in $\cal D$.
Let us show that any strict subset $R$ of $S$ is covered in some 
configuration from $\mathcal{D}$ and, hence, that Condition $4$ is satisfied.
This is done by inductevely restricting the set of configurations from $\D$, 
by selecting the maximal subset in which some process from $\Pi\setminus P$ 
may cover only one register. The idea is to select a process
which may cover only one not-yet covered register in $R$, and to select
the subset in which this process covers this register.

Let $S_0$ be a non-empty subset of $S$. Let $p_0$ be a process from $\Pi\setminus P$ which 
might be covering a register $r_0$ in $S_0$ or a register $r_0'$ in $S\setminus S_0$ in $\cal D$. 
Let us assume that such a process exists and consider $\mathcal{D}_0$ to be the 
subset of $\mathcal{D}$ including all configurations in which~$p_0$ 
is covering~$r_0'$. Now let $S_1=S_0\cup\{r_0'\}$ and repeat this procedure
using $S_1$ to select some $p_1$ and compute~${\cal D}_1$, etc... 
As long as a process can be selected satisfying the condition, 
the sets~$S_i$ keep increasing with $i$. 
Consider the round $j$ at which the procedure fails to find such a process. 
This implies that there is no process which might be covering a register 
from either $S_j$ or $S\setminus S_j$ in ${\cal D}_{j-1}$. Note that by 
construction ${\cal D}_{j-1}$ is a non-empty subset of ${\cal D}$.

If $S_j\neq S$, then $S_j$ and $S\setminus S_j$ forms a partition 
of $S$ into two non-empty subsets. Thus, by assumption, 
there exists a process $q$ which might be covering a register $r_q$
in $S_j$ or a register $r_q'$ in~$S\setminus S_j$ in $\cal D$. 
Consider some configuration $D\in {\cal D}_{j-1}$.
According to Condition $3$ of the confusion definition, 
as $D\in {\cal D}$, $q$ is covering either $r_q$ or $r_q'$ in $D$, 
and there exists a configuration $D'$ in which $q$ is covering the
other register in $\{r_q,r_q'\}$, relatively to $D$, and such that $D$ 
and $D'$ are indistinguishable to all other processes. 
As $I(\{D,D'\},\Pi\setminus\{q\})$, if $D$ was kept in some 
restriction of~$\D_{i-1}$ towards $\D_i$, then $D'$ was also kept 
unless $q$ was the corresponding selected process $p_i$.
But if $q$ was selected in an earlier iteration, 
both $r_q$ and $r_q'$ would be included in $S_j$. 
Thus $D$ and $D'$ belong to ${\cal D}_{j-1}$ and therefore $q$ is a 
valid selection for $p_j$. This contradiction implies that therefore~$S_j=S$.

By construction, all registers in $S_j\setminus S_0$ are covered in all 
configurations in $\mathcal{D}_{j-1}$. As this is true for any 
non-empty $S_0$ and as $S_j=S$, any strict subset of $S$ is covered 
in some configuration of $\mathcal{D}$ and therefore Condition $4$ is satisfied.
\end{proof}

Lemma~\ref{lem:interCharac} can be used to show that, 
given any confusion for some $P$ distinct from $\Pi$, $S$ and $\D$,
we can identify a process $p\in \Pi\setminus P$ and a register~$r\in S$ 
such that $P\cup\{p\}$ is confused on $S\setminus\{r\}$ for a subset 
$\cal D'$ of $\cal D$ which includes any given $C\in\D$:

\begin{lemma} 
Given $P\subsetneq \Pi$, $S\subseteq \mathcal{R}$, ${\cal D}\subseteq \mathit{Reach}$: 
$\mathit{Confused}(P,S,{\cal D}) \implies$\\
$\exists p\in\Pi\setminus P,\exists r\in S,\forall C\in\D,\exists \D'\subseteq \D:
(C\in\D')\wedge\mathit{Confused}(P\cup\{p\},S\setminus\{r\},\D')$.\label{lem:reducedConfusion}
\end{lemma}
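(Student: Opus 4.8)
The plan is to use the connectivity characterization of Condition~4 from Lemma~\ref{lem:interCharac} to pick the process $p$ and register $r$ that get "absorbed" into the confused set. Fix a confusion $\mathit{Confused}(P,S,\D)$ with $P\subsetneq\Pi$. By Condition~2, $|S|=n+1-|P|\geq 2$, so $S$ has a nontrivial partition. Think of the graph $G$ on vertex set $S$ whose edges are, for each $p'\in\Pi\setminus P$, the pair $\{r_{p'},r_{p'}'\}$ guaranteed by Condition~3; by Lemma~\ref{lem:interCharac}, Condition~4 says exactly that $G$ is connected on $S$. Since $|\Pi\setminus P|=n-|P|=|S|-1$, the connected graph $G$ has exactly $|S|-1$ edges, hence is a \emph{tree}. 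The key structural fact is then that a tree on $|S|\geq 2$ vertices has a leaf: there is a process $p\in\Pi\setminus P$ and a register $r=r_p\in S$ such that $r$ is an endpoint of $p$'s edge and $r$ is covered by \emph{no other} process in $\Pi\setminus P$ in any configuration of $\D$ — equivalently, $r$ is incident only to the edge of $p$ in $G$. Let $r'=r_p'$ be the other endpoint of $p$'s edge, and set $S'=S\setminus\{r\}$.

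Next I would fix an arbitrary $C\in\D$ and construct $\D'$. By Condition~3 applied to $C$ and $p$, $p$ covers either $r$ or $r'$ in $C$; pass if necessary to the indistinguishable-to-all-others companion configuration $D'$ to obtain a configuration in which $p$ covers $r'$ (this is where Condition~3's "or vice versa, and indistinguishable to all other processes" clause is essential — it lets us move $p$ onto $r'$ without disturbing the states seen by $P$ or by the other processes). Define $\D'$ to be the set of all $D\in\D$ in which $p$ covers $r'$; this $\D'$ contains $C$ (after the possible swap to its companion, which is still in $\D$), and I must check $C$ itself — more carefully, $\D'$ should be chosen as the subset of $\D$ in which $p$ covers $r'$, and one argues $C$ can be taken inside it because $\D$ is closed under the companion-swap of Condition~3. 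I would then verify the four confusion conditions for $(P\cup\{p\},S',\D')$: Condition~1 ($I(\D',P\cup\{p\})$) holds since all configurations in $\D'$ already satisfy $I(\D,P)$ and additionally have $p$ in the same "covering $r'$" step — one must note $p$'s local state is the same across $\D'$, which again follows from Condition~3's indistinguishability clause combined with $p$'s step being fixed. Condition~2 is immediate: $|S'|+|P\cup\{p\}| = (|S|-1)+(|P|+1)=n+1$. Condition~3 for $(P\cup\{p\},S',\D')$ is inherited from Condition~3 for $(P,S,\D)$ restricted to processes in $\Pi\setminus(P\cup\{p\})$ and registers in $S'$: since $r$ was a leaf, no process in $\Pi\setminus(P\cup\{p\})$ had $r$ as one of its two candidate registers, so all their candidate pairs already lie in $S'$, and the restriction to $\D'$ preserves the required companions (by the same leaf argument used inside the proof of Lemma~\ref{lem:interCharac}, the companion $D'$ of any $D\in\D'$ for a process $q\neq p$ still has $p$ covering $r'$, hence lies in $\D'$). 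Condition~4 for $(P\cup\{p\},S',\D')$ I would get from Lemma~\ref{lem:interCharac} in the reverse direction: the graph of candidate pairs for $\Pi\setminus(P\cup\{p\})$ over $S'$ is obtained from the tree $G$ by deleting the leaf $r$ and its edge, which is again a tree, hence connected.

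The main obstacle I anticipate is the bookkeeping around Conditions~1 and~3 when restricting from $\D$ to $\D'$: I need that fixing $p$'s step (to "cover $r'$") does not destroy the independence expressed in Condition~3 for the \emph{other} processes — i.e., that for every $q\in\Pi\setminus(P\cup\{p\})$ and every $D\in\D'$, the companion $D''$ witnessing $q$'s alternative covering also lies in $\D'$ (not merely in $\D$). This is precisely the argument already carried out in the second half of the proof of Lemma~\ref{lem:interCharac}: because $I(\{D,D''\},\Pi\setminus\{q\})$ and $p\neq q$, the configurations $D$ and $D''$ agree on $p$'s state, so $D\in\D'$ forces $D''\in\D'$. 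So the real content is just invoking the leaf property of trees and reusing Lemma~\ref{lem:interCharac} twice (once forward to see $G$ is a tree, once backward to re-establish Condition~4 after leaf deletion); everything else is routine verification.
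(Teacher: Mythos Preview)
Your approach is essentially the paper's, just phrased graph-theoretically: your ``$G$ is a tree, pick a leaf $r$'' is exactly the paper's counting argument that some register $r_c\in S$ is covered by a single process $p_c\in\Pi\setminus P$ (sum of degrees $=2|\Pi\setminus P|=2(|S|-1)<2|S|$). Your verification of Conditions~2,~3,~4 for $(P\cup\{p\},S\setminus\{r\},\D')$ via leaf deletion and Lemma~\ref{lem:interCharac} is also what the paper does.

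There is, however, a genuine gap in how you define $\D'$. You set $\D'=\{D\in\D: p\text{ covers }r' \text{ in }D\}$, and you already notice the first problem: if $p$ covers $r$ (not $r'$) in the given $C$, then $C\notin\D'$, contradicting the required $C\in\D'$. Your suggested fix (``swap to the companion'') does not help, because the statement demands $C$ itself, not its companion. The second problem is Condition~1: you claim ``$p$'s local state is the same across $\D'$ \ldots\ follows from Condition~3's indistinguishability clause,'' but Condition~3 only gives, for each $D$, a companion indistinguishable to $\Pi\setminus\{p\}$; it says nothing about $p$'s own state, and nothing in the definition of $\mathit{Confused}$ prevents $p$ from having several distinct local states all poised to write to $r'$.

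Both issues disappear if you adopt the paper's choice: for the given $C$, let $\D'=\{D\in\D: I(\{C,D\},\{p\})\}$, i.e., the configurations in which $p$ has the same local state as in $C$. Then $C\in\D'$ trivially and Condition~1 is immediate. Your arguments for Conditions~3 and~4 go through unchanged: for any $q\in\Pi\setminus(P\cup\{p\})$ and $D\in\D'$, the companion $D''\in\D$ from Condition~3 satisfies $I(\{D,D''\},\Pi\setminus\{q\})$, and since $p\neq q$ this forces $D''\in\D'$; the registers $r_q,r_q'$ lie in $S\setminus\{r\}$ because $r$ is a leaf; and the leaf-deleted tree is still connected, so Lemma~\ref{lem:interCharac} yields Condition~4. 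Note that this works regardless of whether $p$ covers $r$ or $r'$ in $C$.
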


\begin{proof}
According to Condition~$3$, a process may be covering 
exactly two registers from $S$ in~$\D$, thus, the sum over $S$ 
of how many distinct processes may cover each register equals to $2|\Pi\setminus P|=2(n-|P|)$. 
Note that any register $r\in S$ may be covered by at least by one process in 
$\Pi\setminus P$ in $\cal D$ as any strict subset of $S$ may be covered (Condition~$4$).  
Therefore, there exists a register $r_c\in S$ which can be covered 
by a single process $p_c\in\Pi\setminus P$ in $\cal D$. 
Indeed, if all registers in $S$ might be covered by two distinct processes, 
then, the sum over $S$ of how many distinct processes may cover each register
(equal to $2(n-|P|)$), would be greater than or equal to $2|S|$,
or $n-|P|<|S|$ as~$|P|+|S|=n+1$~(Condition~$2$).

Let $\D_c$ be the subset of $\D$ which includes all configurations in~$\D$ 
that are indistinguishable to~$p_c$ from any configuration $C\in \D$.
Let us show that $Confused(P\cup\{p_c\},S\setminus\{r_c\},{\cal D}_c)$. 
Condition~$1$ holds as by construction all configurations in ${\cal D}_c$ are 
indistinguishable to $p_c$ and as they are indistinguishable to all processes in $P$, 
since ${\cal D}_c$ is a subset of $\cal D$.
It is immediate, as we remove a register from $S$ and add a process to $P$, that Condition $2$ holds.

Now consider any process $p\in\Pi\setminus(P\cup \{p_c\})$ and any configuration $D\in \D_c$. 
As $p\in \Pi\setminus P$ and $D\in\D$, Condition $3$ of the confusion definition implies that 
that there exists $D'\in \D$, $I(\{D,D'\},\Pi\setminus\{p\})$, such that $p$ covers  
$r_p$ and $r_p'$ in $D$ and $D'$ respectively (or vice-versa). 
Since $I(\{D,D'\},\Pi\setminus\{p\})$, $D'\in \D_c$, and since $p_c$ is the only process 
which may cover $r_c$ in~$\D$, $r_p$ and $r_p'$ belong to $S\setminus\{r_c\}$. 
Thus Condition $3$ is verified for $P\cup\{p\}$, $S\setminus\{r_c\}$ and $\D_c$.

Lastly, let us consider some partition of $S\setminus\{r_c\}$ 
into two non-empty subsets $S_1$ and~$S_2$. 
Both ($S_1\cup\{r_c\}$,$S_2$) and ($S_1$,$S_2\cup\{r_c\}$) form a partition
of $S$ in two non-empty subsets. Thus, as~$\mathit{Confused}(P,S,{\cal D})$, we can apply
Lemma~\ref{lem:interCharac} and obtain that $\exists p_1,p_2\in \Pi\setminus P$ 
such that~$p_1$, respectively $p_2$, might cover registers from either $S_1\cup\{r_c\}$ or $S_2$, 
respectively either $S_1$ or $S_2\cup\{r_c\}$, in $\cal D$. 
It follows that $p_c$ cannot be both $p_1$ and $p_2$ as $p_c$ might cover only 
two registers in $\cal D$, one of which is $r_c$. Thus, depending 
whether the other register belongs to $S_1$ or $S_2$, $p_1$ or $p_2$ is disctinct from $p_c$. 
W.l.o.g, assume that $p_1\neq p_c$. As $p_c$ is the only process which may be covering 
$r_c$, this implies that $p_1$ might be covering a register from either $S_1$ or~$S_2$.
Furthermore, since Conditions $1$, $2$ and $3$ applies to 
$P\cup\{p\}$, $S\setminus\{r_c\}$ and $\D_c$, we can apply Lemma~\ref{lem:interCharac}
to obtain that Condition $4$ is also verified.
\end{proof}

We now show that the characterization can be used 
to increase the number of registers that processes are 
confused on, by decreasing the number of confused processes:

\begin{lemma}
Let $P\subsetneq \Pi$, $S\subseteq\mathcal{R}$ and $\D\subseteq \mathit{Reach}$ 
such that $\mathit{Confused}(P,S,\D)$. 

\noindent Given $C\in\D$, if $\exists p\in P, r_1\in S, r_2\in \mathcal{R}\setminus S$ and 
if there exist $P$-only executions $\alpha_1$ and~$\alpha_2$ which are
applicable to $C$, and such that $I(\{C\alpha_1,C\alpha_2\},\Pi\setminus\{p\})$, 
$Cover(\{r_1\},\{p\},C\alpha_1)$ and $Cover(\{r_2\},\{p\},C\alpha_2)$, 
then we have $\mathit{Confused}(P\setminus \{p\},S\cup\{r_2\},(\D\alpha_1)\cup(\D\alpha_2))$.
\label{lem:extendConfusion}
\end{lemma}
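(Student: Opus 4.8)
The plan is to verify the four conditions in the definition of $\mathit{Confused}$ for the triple $(P',S',\D')$ with $P'=P\setminus\{p\}$, $S'=S\cup\{r_2\}$, and $\D'=(\D\alpha_1)\cup(\D\alpha_2)$. First I would record, via Observation~\ref{Obs:P-only}, that since $\alpha_1,\alpha_2$ are $P$-only, applicable to $C\in\D$, and $I(\D,P)$ holds, both are applicable to every configuration of $\D$ (so $\D\alpha_1$ and $\D\alpha_2$ are well-defined) and $I(\D\alpha_1,P)$, $I(\D\alpha_2,P)$ hold. I would also isolate a mild strengthening of that observation, proved by the same induction on execution length: a $P$-only execution never modifies the state of a process outside $P$, and the memory contents together with the states of the processes in $P$ evolve as a deterministic function of their initial values; hence if $Q\supseteq P$ and $I(\{D,E\},Q)$, then applying a $P$-only execution to both preserves $I(\cdot,Q)$. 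This small extension is the one genuinely delicate point; the rest is bookkeeping.

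Condition~2 is immediate: $r_2\notin S$ and $p\in P$ give $|S'|+|P'|=(|S|+1)+(|P|-1)=|S|+|P|=n+1$. For Condition~1, from $P'\subseteq P$ I would get $I(\D\alpha_1,P')$ and $I(\D\alpha_2,P')$; the hypothesis $I(\{C\alpha_1,C\alpha_2\},\Pi\setminus\{p\})$ restricts to $I(\{C\alpha_1,C\alpha_2\},P')$; and since $C\alpha_1\in\D\alpha_1$ and $C\alpha_2\in\D\alpha_2$, transitivity of indistinguishability yields $I(\D',P')$.

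For Condition~3 the relevant process set is $\Pi\setminus P'=(\Pi\setminus P)\cup\{p\}$. For the freed process $p$ I would take $\{r_1,r_2\}\subseteq S'$: by $I(\D\alpha_1,P)$, $p$ has in every configuration of $\D\alpha_1$ the same local state as in $C\alpha_1$, where it covers $r_1$, so $p$ covers $r_1$ throughout $\D\alpha_1$; symmetrically $p$ covers $r_2$ throughout $\D\alpha_2$. Given $D_0\alpha_1\in\D\alpha_1$, the matching configuration is $D_0\alpha_2\in\D\alpha_2$, and $I(\{D_0\alpha_1,D_0\alpha_2\},\Pi\setminus\{p\})$ follows from the strengthened observation: the memory and the states of $P\setminus\{p\}$ after $\alpha_i$ depend only on the common starting memory and $P$-states (hence are the same whether we start from $C$ or from $D_0$) and by hypothesis agree between $C\alpha_1$ and $C\alpha_2$, while the processes of $\Pi\setminus P$ do not move in $\alpha_1$ or $\alpha_2$ and so keep their $D_0$-states in both. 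For an old process $q\in\Pi\setminus P$ I would reuse the pair $\{r_q,r_q'\}\subseteq S\subseteq S'$ from Condition~3 of $\mathit{Confused}(P,S,\D)$; since $q$ takes no step in $\alpha_1,\alpha_2$, covering is inherited by $\D\alpha_i$, and given $D_0\alpha_i$ with $q$ covering $r_q$ (the case $r_q'$ being symmetric), I would pick the $\D$-witness $D_0'$ with $q$ covering $r_q'$ and $I(\{D_0,D_0'\},\Pi\setminus\{q\})$ and set the match to $D_0'\alpha_i\in\D'$, carrying indistinguishability through $\alpha_i$ by the strengthened observation (note $P\subseteq\Pi\setminus\{q\}$).

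Having established Conditions~1--3 for $(P',S',\D')$, I would obtain Condition~4 through Lemma~\ref{lem:interCharac}: it suffices to show that every partition $S'=S_1'\sqcup S_2'$ into non-empty parts admits a process of $\Pi\setminus P'$ that may cover (in configurations of $\D'$) a register of $S_1'$ and a register of $S_2'$. If one part is $\{r_2\}$, the process $p$ works, covering $r_1\in S=S'\setminus\{r_2\}$ in $\D\alpha_1$ and $r_2$ in $\D\alpha_2$. Otherwise, say $r_2\in S_2'$ with $S_2'\setminus\{r_2\}\neq\emptyset$; then $S_1'$ and $S_2'\setminus\{r_2\}$ partition $S$ into non-empty parts, so Lemma~\ref{lem:interCharac} applied to $\mathit{Confused}(P,S,\D)$ yields $q\in\Pi\setminus P\subseteq\Pi\setminus P'$ covering a register of $S_1'$ and a register of $S_2'\setminus\{r_2\}\subseteq S_2'$ in configurations of $\D$, hence of $\D\alpha_1\subseteq\D'$ since $q$ does not move in $\alpha_1$. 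The main obstacle, as noted, is making the strengthened form of Observation~\ref{Obs:P-only} precise and applying it correctly to indistinguishability sets strictly larger than $P$; everything else is counting (Condition~2), transitivity (Condition~1), or a direct appeal to Lemma~\ref{lem:interCharac} (Condition~4).
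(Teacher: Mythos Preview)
Your proof is correct and follows essentially the same approach as the paper: verify Conditions~1--3 directly and then invoke Lemma~\ref{lem:interCharac} for Condition~4, with the partition case-split on whether $\{r_2\}$ is one of the parts. The paper's argument is terser and simply asserts the key indistinguishability facts (in particular that $I(\{D\alpha_1,D\alpha_2\},\Pi\setminus\{p\})$ for every $D\in\D$, and that Condition~3 for processes in $\Pi\setminus P$ transfers along the $P$-only executions), whereas you make explicit the ``strengthened'' form of Observation~\ref{Obs:P-only} needed to carry indistinguishability for a set $Q\supseteq P$ through a $P$-only execution; this is a genuine improvement in rigor over the paper's presentation, not a different route.
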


\begin{proof}
Following Observation~\ref{Obs:P-only}, as $\alpha_1$ and $\alpha_2$ are $P$-only,
and as $\D$ satisfies Condition~$1$ for $P$, $(\D\alpha_1)\cup(\D\alpha_2)$ satisfies Condition $1$ for $P\setminus\{p\}$. 
Condition $2$ trivially holds for $P\setminus\{p\}$ and $S\cup\{r\}$ as it holds 
for $P$ and $S$ and we remove a process from $P$ and add a register to $S$.

Condition $3$ is satisfied for all processes in $\Pi\setminus P$ and configurations in 
$(\D\alpha_1)\cup(\D\alpha_2)$ as~$\alpha_1$ and~$\alpha_2$ are $P$-only, 
and as $\D$ satisfies Condition~$3$ for any process in $\Pi\setminus P$. 
Moreover, as configurations in~$\D$ are indistinguishable to $p\in P$, 
$p$ may only cover $r_1$ if $D\in\D\alpha_1$ and cover $r_2$ if $D\in\D\alpha_2$. 
But as given any $D\in \D$ we have $I(\{D\alpha_1,D\alpha_2\},\Pi\setminus\{p\})$, 
Condition $3$ is also satisfied for $p$.

Since Conditions $1$, $2$ and $3$ are satisfied, we can apply 
Lemma~\ref{lem:interCharac} and obtain that
$\mathit{Confused}(P\setminus \{p\},S\cup\{r_2\},(\D\alpha_1)\cup(\D\alpha_2))$.
Indeed, a partition of two non-empty subsets of~$S\cup\{r_2\}$
can be reduced, unless the partition is $(S,\{r_2\})$, 
to a partition of two non-empty subsets of $S$. In this case, 
Lemma~\ref{lem:interCharac} can be applied for $P$,$S$ and $\D$, 
which provides us, for any partition of $S$, with a process that may cover in $\D$
a register from either set of the partition. As $\alpha_1$ is $P$-only, 
it still holds for $\D\alpha_1$. 
For the partition $(S,\{r_2\})$, $p$ may cover either $r_1\in S$ or $r_2$ in 
$(\D\alpha_1)\cup(\D\alpha_2)$. 
\end{proof}

\subsection{The lower bound}

To establish our lower bound, we show that there
is a set of reachable configuration $\D$ in which 
there is a process confused on all $n$ registers.
Intuitively, we proceed by induction on the number of ``confusing'' registers.  
For the base case, we show that the initial
configuration can lead to a confusion of all but one process on \emph{two registers}:

\begin{lemma}\label{lem:initConfusion}
$\exists \D\in\mathit{Reach},\exists p\in \Pi,\exists S\subseteq \mathcal{R}: \mathit{Confused}(\Pi\setminus\{p\},S,\D){}.$
\end{lemma}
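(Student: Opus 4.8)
The goal is to construct, starting from the initial configuration, a set $\D$ of reachable configurations and identify a process $p$ and a two-element register set $S$ such that $\mathit{Confused}(\Pi\setminus\{p\},S,\D)$ holds. The plan is to bring $n-1$ processes, one at a time, to the point where each is poised to perform its very first write, and to arrange the construction so that each of them can be steered to cover either of two fixed registers. Fix $p\in\Pi$ arbitrarily, and let $P=\Pi\setminus\{p\}$. We build $\D$ by running, for each process $q\in P$, a solo execution of $q$ from its initial state until $q$ is about to execute a write step (such a step must exist: by $2$-obstruction-freedom, $q$ alone must complete its first Write operation, and by the covering argument of~\cite{BL83}, since only $n$ registers are available, a Write operation cannot terminate without writing to registers — in fact, to complete a persistent write while being unaware of any other process, $q$ must write to every register, so in particular it performs at least one write). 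Let $r_q$ be the register $q$ is poised to write in this solo run.

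The subtlety is Condition~$3$: we need, for each $q\in P$, \emph{two} registers $r_q,r_q'$ and configurations where $q$ covers one or the other, indistinguishably to everyone else. The plan here is to exploit the comparison-based assumption together with indistinguishability: run $q$ solo from the initial configuration up to its first pending write, obtaining one branch; then run $q$ solo again but after first letting some other process take steps that are subsequently \emph{hidden} (overwritten without intervening reads, cf. Observation~\ref{Obs:hidden}), forcing $q$'s local state to be identical while the memory differs just enough that its pending write lands on a different register $r_q'$. Concretely, since $q$'s first write in a solo run can only depend on what it reads, and a completed persistent solo Write must touch all $n$ registers, by stopping $q$ at two \emph{different} points of the round-robin-like traversal — or by inserting a hidden write by $p$ that alters one register $q$ reads — we obtain two indistinguishable-to-others configurations with $q$ poised over distinct registers. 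We then let $\D$ be the set of all configurations obtained by composing, over all $q\in P$, these independent binary choices; Observation~\ref{Obs:P-only} (applied with $P$ the singleton $\{q\}$, iterated) guarantees that these $P$-only prefixes remain applicable and that $I(\D,P)=I(\D,\Pi\setminus\{p\})$ is preserved, giving Condition~$1$, while Condition~$2$ is immediate since $|S|=2$ and $|P|=n-1$.

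For Condition~$4$, we must ensure that every strict subset of $S$ — i.e., $\emptyset$, $\{r_1\}$, $\{r_2\}$ where $S=\{r_1,r_2\}$ — is covered by $P$ in some configuration of $\D$. Here the plan is to appeal to Lemma~\ref{lem:interCharac}: since Conditions $1$, $2$, $3$ will have been established, Condition~$4$ is equivalent to the "connectedness" condition, which for $|S|=2$ just says that some process in $P$ may be poised over $r_1$ in one configuration and over $r_2$ in another; this is arranged by choosing one particular $q_0\in P$ whose pair of registers is exactly $S=\{r_1,r_2\}$, which is exactly what the Condition~$3$ construction for that $q_0$ already gives us. Thus $S$ is \emph{defined} as the pair of registers associated with this distinguished process $q_0$.

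\textbf{Main obstacle.} The delicate point is Condition~$3$: producing, for \emph{every} $q\in\Pi\setminus\{p\}$, a genuine pair of coverable registers with configurations indistinguishable to all processes other than $q$, starting only from the initial configuration. A naive solo run gives only one pending write per process. The argument must leverage that a solo Write, to be persistent and hence to terminate in a $2$-obstruction-free implementation on $n$ registers, is forced to write to all $n$ registers (otherwise the adversary covers the untouched ones and overwrites the rest); stopping the traversal at two distinct moments, or inserting a hidden write by $p$ that perturbs exactly one read value, yields the required second branch. Making this perturbation \emph{invisible to all processes other than $q$} — so that it can be composed independently across all $q\in P$ — is where the hidden-execution machinery (Observation~\ref{Obs:hidden}) and the comparison-based symmetry do the real work, and is the step I expect to require the most care.
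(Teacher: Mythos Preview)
You have misread the definition of $\mathit{Confused}(P,S,\D)$, and this misreading drives your entire plan in the wrong direction. Condition~3 quantifies over processes in $\Pi\setminus P$, not over processes in $P$; likewise, Condition~4 asks that every strict subset of $S$ be covered by $\Pi\setminus P$, not by $P$. In the target statement $P=\Pi\setminus\{p\}$, so $\Pi\setminus P=\{p\}$ is a \emph{single} process. Thus Condition~3 requires only that this one process $p$ have two registers $r,r'\in S$ that it may cover in different configurations of $\D$, indistinguishably to everyone else; and Condition~4 requires that $\{p\}$ alone cover each of $\emptyset,\{r\},\{r'\}$ in some configuration of $\D$. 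Your plan instead tries to engineer, for every one of the $n-1$ processes $q\in P$, a pair of coverable registers---work that is neither required nor helpful---while the single process $p$ that actually matters never takes a step in your construction, and hence cannot cover anything.

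Once the quantifiers are read correctly, the lemma becomes much easier than your outline suggests: only two configurations are needed, not $2^{n-1}$. Pick two processes $p_1,p_2$; by the comparison-based assumption their first pending write in a solo run targets the same register $r$. Bring both to that point, then let $p_1$ continue (writing only to $r$) until it is poised to write some $r'\neq r$, and finally let $p_2$ perform its pending write on $r$. This hides $p_1$, so the resulting configuration is indistinguishable to $\Pi\setminus\{p_1\}$ from the one where $p_1$ never moved past covering $r$. Taking $p=p_1$, $S=\{r,r'\}$, and $\D$ consisting of just these two configurations already yields all four conditions. Your ``main obstacle''---producing two branches for every $q\in P$---is an artifact of the misreading and disappears entirely.
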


\begin{proof}
Consider any two processes $p_1$ and $p_2$.
Since the algorithm is comparison-based, the first write the two
processes perform in a solo execution is on the same register, let us call it~$r$.
Let $p_1$ execute solo until it is about to write to $r$ and then
do the same with $p_2$, let $C$ be the resulting configuration.
Consider the execution $\alpha$ from $C$ in which $p_1$ executes 
until it is poised to write to a register $r'\neq r$ and
then $p_2$ executes its pending write on $r$.
This execution is valid as $p_1$ must eventually write to an uncovered register.

We obtain $\mathit{Confused}(\Pi\setminus\{p_1\},\{r,r'\},\{C\alpha,C\alpha|_{\{p_2\}}\})$.
Indeed, as $p_1$ is hidden in $\alpha$, following Observation~\ref{Obs:hidden}, 
we have $I(\{C\alpha,C\alpha|_{\{p_2\}}\},\Pi\setminus\{p_1\})$ (Condition $1$). 
We have $|\Pi\setminus\{p_1\}|+|\{r,r'\}|=n+1$ (Condition $2$). 
As $p_1$ covers $r'$ in $C\alpha$ and $p_2$ covers $r$ in $C\alpha|_{\{p_2\}}$, 
we have Condition $4$. Condition $3$ directly follows 
from Conditions $1$ and $4$ in this setting.
\end{proof}

We now prove our inductive step. Given a set of configurations in
which a set of processes,~$P\neq\Pi$, is confused on a set of registers, 
$S\neq\R$, we can obtain a set of configurations in which a set $P'$ of 
processes are confused on a set $S'$ of strictly more than $|S|$ registers:

\begin{lemma}\label{lem:inductiveConfusion}
$\exists {\cal D}\subseteq \mathit{Reach}, P\subsetneq \Pi, 
S\subsetneq \mathcal{R}: \mathit{Confused}(P,S,{\cal D})$\\
$\implies \exists {\cal D'}\subseteq \mathit{Reach}, P'\subseteq
\Pi, S'\subseteq \mathcal{R}, S\subsetneq S':
\mathit{Confused}(P',S',{\cal D'})$.
\end{lemma}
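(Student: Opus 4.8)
The plan is to start from a confusion $\mathit{Confused}(P,S,\mathcal{D})$ with $P\neq\Pi$ and $S\neq\mathcal{R}$, and to produce a confusion on a strictly larger register set. The mechanism is Lemma~\ref{lem:extendConfusion}: to enlarge $S$ by one register, it suffices to find a process $p\in P$, a register $r_2\in\mathcal{R}\setminus S$, and two $P$-only executions $\alpha_1,\alpha_2$ from some $C\in\mathcal{D}$ that are indistinguishable to $\Pi\setminus\{p\}$ and leave $p$ poised to write on a register in $S$ in one and on $r_2$ in the other. So the real work is to \emph{construct} such a configuration and such a pair of executions, and the tool for that is the $2$-obstruction-freedom assumption together with Lemma~\ref{lem:reducedConfusion}.

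**Key steps, in order.** First I would apply Lemma~\ref{lem:reducedConfusion} (repeatedly, if needed) to reduce to the case $|P|=n-1$, i.e.\ all but one process are confused; equivalently, with one step of reduction we get a process set $P_0$ and register set $S_0$ with $|P_0|=n-1$, $|S_0|=2$, confused on some $\mathcal{D}_0\subseteq\mathcal{D}$ containing a prescribed configuration $C$. Wait --- more carefully: Lemma~\ref{lem:reducedConfusion} \emph{grows} $P$ and \emph{shrinks} $S$, so from $\mathit{Confused}(P,S,\mathcal{D})$ with $|P|+|S|=n+1$ I can iterate to reach $\mathit{Confused}(\Pi\setminus\{p^*\},\{r^*,r^{**}\},\mathcal{D}^*)$ for a single uncovered process $p^*$ and a two-register set, with $C\in\mathcal{D}^*$. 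Now I want $p^*$ to run and to be steered so that, at the end, it is poised to write either inside a chosen target register or onto a fresh register outside the current confusion set. Since the implementation is $2$-obstruction-free, in the $\{p^*\}$-only (or $\{p^*,\text{helper}\}$-only) extension of $C$, $p^*$ must complete infinitely many Write operations; in particular it must, infinitely often, write to some register --- and by a covering argument it must eventually write to a register that is \emph{not} currently covered by the $n-1$ confused processes. Because $|S^*|=2<n$ (as $S\subsetneq\mathcal{R}$ and we only shrank), there is at least one register outside $S^*$; I use indistinguishability plus the freedom in Condition~3 to arrange that the covering pattern of $\Pi\setminus P^*$ in the relevant configurations of $\mathcal{D}^*$ leaves a specific outside register $r_2$ uncovered, forcing $p^*$ to head toward $r_2$. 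I then split $p^*$'s solo run at the moment just before it writes: in $\alpha_2$ it is poised on $r_2$; in $\alpha_1$ I let a helper (one of the confused processes, awakened for one step to perform its pending write, then re-suspended) overwrite, keeping $p^*$ poised instead on a register of $S^*$ --- mirroring exactly the two-branch construction in Lemma~\ref{lem:initConfusion}. This yields the hypotheses of Lemma~\ref{lem:extendConfusion}, giving $\mathit{Confused}(P^*\setminus\{p^*\},S^*\cup\{r_2\},(\mathcal{D}^*\alpha_1)\cup(\mathcal{D}^*\alpha_2))$. Finally, observe $|S^*\cup\{r_2\}|=|S^*|+1>|S|$ need not hold directly (since we shrank $S$ first), so I must be careful: I should \emph{not} shrink all the way down, but only shrink $S$ by as little as needed --- better, I shrink $S$ by one register via Lemma~\ref{lem:reducedConfusion} and then grow it back by two or more via one application of Lemma~\ref{lem:extendConfusion}. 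Concretely: one reduction step gives $\mathit{Confused}(P\cup\{p\},S\setminus\{r\},\mathcal{D}')$ with $C\in\mathcal{D}'$; then apply the extension with the now-freed process $p$ and prove it ends up poised, in the two branches, on a register of $S\setminus\{r\}$ versus on a register in $\mathcal{R}\setminus S$ (which is nonempty since $S\subsetneq\mathcal{R}$), yielding confusion on a set containing $(S\setminus\{r\})$ plus a new register, hence of size $|S|$; that is not yet progress. So instead I need the extension to add $r$ \emph{back} plus one genuinely new register --- i.e.\ I must run $p$ long enough that across the branch-executions it sweeps through enough writes to re-cover $r$ and reach for an outside register, which again is guaranteed because $p$'s solo continuation is live and must write to all registers infinitely often (by the same argument used in the upper-bound discussion / the $n$-register lower bound of \cite{BL83,DFGR15}). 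I then invoke Lemma~\ref{lem:extendConfusion} once with the outside register, landing on $S'$ with $|S'|=|S|+1$.

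**Main obstacle.** The delicate point is engineering the two $P$-only executions $\alpha_1,\alpha_2$ so that simultaneously (i) they are indistinguishable to all of $\Pi\setminus\{p\}$, (ii) one leaves $p$ poised inside the old set $S$ and the other poised on a fresh register, and (iii) the \emph{remaining} confused processes' covering structure (Condition~3 and the connectivity characterization of Lemma~\ref{lem:interCharac}) is preserved across the split. This is where $2$-obstruction-freedom is essential and subtle: it only guarantees $p$'s progress when at most two processes are live, so the helper that produces the indistinguishable sibling execution $\alpha_1$ can contribute at most finitely many steps before being re-hidden, and I must verify that a single pending write by a confused process (re-established afterwards using the confusion's Condition~3 flexibility) suffices to deflect $p$ without disturbing indistinguishability for everyone else. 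Managing the bookkeeping of which confused process covers which register after the deflection --- so that the connectivity graph over $S'$ remains connected --- is the technical heart of the argument; everything else is an application of Observations~\ref{Obs:P-only} and \ref{Obs:hidden} and the already-proved Lemmas~\ref{lem:interCharac}, \ref{lem:reducedConfusion}, \ref{lem:extendConfusion}.
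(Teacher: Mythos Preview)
Your proposal has a genuine gap: none of the constructions you sketch actually produces a confusion on a set strictly larger than~$S$. You correctly observe that one application of Lemma~\ref{lem:reducedConfusion} shrinks $S$ by one register and one application of Lemma~\ref{lem:extendConfusion} grows it by one, for net change zero --- and you never escape this. Your final suggestion, to ``run $p$ long enough'' so that it ``re-covers $r$ and reaches for an outside register'' and then ``invoke Lemma~\ref{lem:extendConfusion} once'', still adds only a single register: Lemma~\ref{lem:extendConfusion} takes one process $p\in P$ and one outside register $r_2$ and returns confusion on $S\cup\{r_2\}$; applied to the reduced confusion on $S\setminus\{r\}$ this yields size $|S|$, not $|S|+1$. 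Moreover, your construction of $\alpha_1$ uses a ``helper'' from the covering processes $\Pi\setminus P^*$ to take a step, which violates the $P$-only hypothesis of Lemma~\ref{lem:extendConfusion}.

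The paper's argument supplies the missing idea, and it is not a small bookkeeping fix. First, it runs the processes of $P$ themselves (a $P$-only execution $\alpha$) so that they cover all of $\mathcal{R}\setminus S$; since $|P|=|\mathcal{R}\setminus S|+1$, exactly one process $p\in P$ is left covering a register of $S$. This $p$ is the process whose solo run must hit every register of $S$ infinitely often, and $2$-obstruction-freedom is used here with a process $p_c\in\Pi\setminus P$ (obtained via Lemma~\ref{lem:reducedConfusion}) that is hidden by $p$'s writes until it is forced to cover some $r'\in\mathcal{R}\setminus S$. The crucial step you are missing is that this yields \emph{two} confusions for the same set $P$ simultaneously: the original one on $S$, and a second one on the shifted set $(S\cup\{r'\})\setminus\{r\}$, both of size $|S|$. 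The paper then takes the process $p'\in P$ that was covering $r'$ after $\alpha$, runs $p$ and $p'$ together, and performs a case analysis on the register $p'$ is next poised to write; depending on whether that register lies inside or outside $S$, Lemma~\ref{lem:extendConfusion} is applied to one or the other of the two confusions, in either case producing a confusion on a set of size $|S|+1$. Maintaining two overlapping confusions and exploiting a \emph{second} process of $P$ (namely $p'$, not the freed $p_c$) is the mechanism that breaks the ``net zero'' barrier; your proposal does not contain it.
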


\begin{proof}
Given $\mathit{Confused}(P,S,\D)$, consider $C\in\D$ such that exactly
$|S|-1$ registers in $S$ are covered by processes in $\Pi\setminus P$.
Then we can reach a configuration  
in which all registers not in~$S$ are covered 
by processes in $P$.
Indeed, when executed solo starting from $C$, 
a process must eventually write to a register that is not covered 
in $C$.
Thus, it must eventually write either to a register in $\mathcal{R}\setminus S$
or to the uncovered register in $S$.
Recall that, as $|S|+|P|=n+1$, we have $|\R\setminus S|=|P|-1$.
Thus, by concatenating solo executions of processes in $P$ until 
they are poised to write to uncovered registers,
we reach a configuration $C\alpha$ in which \emph{all} registers are
covered. Let $p$ be the process in $P$ covering a register from $S$ in $C\alpha$. 
Note that, as $\alpha$ is $P$-only, we have $\mathit{Confused}(P,S,\D\alpha)$.
Thus:
\[
\mathit{Cover}(\mathcal{R}\setminus S,P\setminus\{p\},C\alpha)\wedge\mathit{Confused}(P,S,\D\alpha){}.
\]

Now from this set of configurations, we are going to build a new one in which 
$P$ is confused on \emph{two} distinct sets of registers.
By Lemma~\ref{lem:reducedConfusion}, there exist
$p_c\in\Pi\setminus P$ and $r\in S$ such that for any $C'\in\D$ we have
$\mathit{Confused}(P\cup\{p_c\},S\setminus\{r\},\D')$ with $C'\in\D'$.
Let us select $C'\in\D$ to be a configuration in which $p_c$ covers $r_c\in S\setminus\{r\}$ 
(Since we have $\mathit{Confused}(P,S,\D)$, $p_c\in P$ may cover two registers from $S$ in $\D$ 
and so at most one can be $r$).

If $p$ is executed solo from $C'\alpha$, it must write infinitely often to \emph{all} registers in $S$
to ensure that it writes to an uncovered register. 
Hence, in a $\{p,p_c\}$-only execution from $C'\alpha$, $p_c$ can be hidden for arbitrarly many steps 
as long as $p_c$ does not write to a register outside of $S$. 
But, as the algorithm satisfies $2$-obstruction-freedom, $p_c$ \emph{must} eventually 
write to a register outside of $S$ in such an execution. 
Consider the $\{p,p_c\}$-only execution $\beta$ from $C'\alpha$ in which~$p_c$ is hidden 
and such that $p_c$ executes until it is poised to write to some register $r'\in\mathcal{R}\setminus S$.
Thus, we get two configurations $C'\alpha\beta$ and
$C'\alpha\beta|_{\{p\}}$, indistinguishable to all processes but $p_c$, in which $p_c$ covers, respectively, $r'\in \mathcal{R}\setminus S$ and $r_c\in S$.
Thus, the conditions of Lemma~\ref{lem:extendConfusion} hold for $\D'$,
$p_c$, $\alpha\beta$ and~$\alpha\beta|_{\{p\}}$ and so we obtain     
$\mathit{Confused}(P,(S\cup\{r'\})\setminus\{r\}, (\D'\alpha\beta)\cup(\D'\alpha\beta|_{\{p\}}))$.
As $\beta$ is $\{p,p_c\}$-only and $p_c$ is hidden in it, we have:
\[\mathit{Cover}(\mathcal{R}\setminus S,P\setminus\{p\},C\alpha\beta)\wedge
\mathit{Confused}(P,S,\D\alpha\beta|_{\{p\}})\wedge\]\[
\mathit{Confused}(P,S\cup\{r'\}\setminus\{r\},(\D'\alpha\beta)\cup(\D'\alpha\beta|_{\{p\}})){}.
\]

Moreover, all configurations in the formula above are
indistinguishable to processes in $P$, since~$\D'\subseteq\D$, $I(\D,P)$, $\alpha\beta$ is
$P\cup\{p_c\}$-only and $p_c$ is hidden in it (Observation~\ref{Obs:hidden}).     

Let $p'$ be the process from $P$ that covers $r'$ in $C\alpha\beta$.
According to $p$ or $p'$, every proper subset of $S$ or $S\cup\{r'\}\setminus\{r\}$ 
may be covered in the current configuration by $\Pi\setminus (P\cup\{p,p\})$ 
and all other registers covered by $P\setminus\{p,p'\}$.
Thus, from $C\alpha\beta$, to complete a Write operation, $p$ or $p'$ must write
to \emph{all} registers in one of the sets $S$, $S\cup\{r'\}\setminus\{r\}$ or~$\{r,r'\}$.

Consider any $\{p,p'\}$-only extension of $C\alpha\beta$. 
If one of $\{p,p'\}$ covers a register in $S\setminus\{r\}$, $r$ or~$r'$, 
then the other process, in any solo extension, must write respectively 
to all registers in $\{r,r'\}$, $S$ or~$(S\cup\{r'\})\setminus\{r\}$.
In particular, since $p'$ covers $r'$ in~$C\alpha\beta$, $p$ running solo 
from $C\alpha\beta$ must eventually cover a register in $S\setminus\{r\}$ 
(Note that $S\setminus\{r\}\neq\emptyset$, since $|P|<n$ and $|P|+|S|=n+1$). 
Then $p'$ executing solo afterwards must write to $r$ and $r'$.
Let us stop $p'$ when it covers a register~$r''\neq r$ for the last time  before
writing  to $r$. Let $\gamma$ be the resulting execution, and
$E=C\alpha\beta\gamma$ be the resulting~configuration.

Let $\E$ and $\E'$ denote the sets of configurations indistinguishable from $E$
to $P$ defined as $\D\alpha\beta|_{\{p\}}\gamma$ and $(\D'\alpha\beta\gamma)\cup(\D'\alpha\beta|_{\{p\}}\gamma)$ respectively. Note that as $\gamma$ is $P$-only, we still have 
$\mathit{Confused}(P,S,\E)$ and $\mathit{Confused}(P,S\cup\{r'\}\setminus\{r\},\E')$.

Now the following two cases are possible:
\begin{enumerate}
\item $r''\not\in S\cup\{r'\}$: In this case, 
we let $p$ continue until it is poised to write on $r$, 
and then, we let the process from $P\setminus\{p,p'\}$ which 
covers $r''$ to proceed to its pending write on $r''$. 
Let~$\delta$ be this $P$-only execution from $E$ in which $p'$ is hidden. 
As $p'$ covers $r\in S$ in $E\delta$ and~$r''\in\mathcal{R}\setminus S$ 
in $E\delta|_{P\setminus\{p'\}}$, as 
$I(\{E\delta,E\delta|_{P\setminus\{p'\}}\},\Pi\setminus\{p'\})$,
and as $\mathit{Confused}(P,S,\E)$, we can apply 
Lemma~\ref{lem:extendConfusion} and obtain 
$\mathit{Confused}(P\setminus\{p'\},S\cup\{r''\},(\E\delta)\cup(\E\delta|_{P\setminus\{p'\}}))$.

\item $r''\in S\cup\{r'\}$, and so $r''\in (S\cup\{r'\})\setminus\{r\}$: 
  Then we have the following sub-cases:
\begin{itemize}
\item Some step performed by $p$ in its solo execution from $E$ makes
  $p'$ to choose a register other than $r$ to perform its next write
  in its solo extension. Clearly, this step of $p$ is a write.
  From the configuration in which $p$ is poised to execute
  this ``critical'' write, let~$p'$ run solo until it is poised to write
  to $r$ and then let $p$ complete its pending write.
  Let~$E\delta$ be the resulting configuration.

  Now consider the execution in which $p$ completes its ``critical''
  write, then $p'$ runs solo until it covers a register $r'''\neq r$.
  Let $E\delta'$ be the resulting configuration.
 Note that as the states of the memory in $E\delta$ and $E\delta'$ are
  identical, we have $I(\{E\delta,E\delta'\},\Pi\setminus\{p'\})$.  
Note that $\delta$ and $\delta'$ are $P$-only executions,
and that $p'$ covers $r$ in $E\delta$ and $r'''$ in~$E\delta'$.
\begin{enumerate}
\item If $r'''\in S$, as we have
  $\mathit{Confused}(P,(S\cup\{r'\})\setminus\{r\}, \E')$, 
  applying Lemma~\ref{lem:extendConfusion}, we obtain 
  $\mathit{Confused}(P\setminus\{p'\},(S\cup\{r'\}),(\E'\delta)\cup(\E'\delta'))$.
\item If $r'''\in\mathcal{R}\setminus S$, as we have $\mathit{Confused}(P,S,\E)$,
applying Lemma~\ref{lem:extendConfusion}, we obtain 
$\mathit{Confused}(P\setminus\{p'\},(S\cup\{r'''\}),(\E\delta)\cup(\E\delta'))$.
\end{enumerate} 
\item Otherwise, no write of $p$ is ``critical'', and we let it run
  from $E$ until it covers $r$ (recall that, as $p'$ covers $r''\in
  (S\cup\{r'\})\setminus\{r\}$, $p$ must eventually write to all
  registers in~$S$ or~$\{r,r'\}$ and, thus, to $r$).       
Let then $p'$ run until it covers $r$, as $p$, and let $\delta$ be this execution. 
From $E\delta$, let $p'$ run until it becomes poised to write to a register~$r'''\neq r$,
and then let $p$ perform its pending write on $r$. Let $\lambda$ be this extension.
Note that as $p'$ is hidden in $\lambda$, we have 
$I(\{E\delta\lambda, E\delta\lambda|_{\{p'\}}\},\Pi\setminus\{p\})$.
Note also that $\delta\lambda$ and $\delta\lambda|_{\{p\}}$ are $P$-only executions such that 
$p'$ covers $r$ in $E\delta\lambda|_{\{p\}}$ and covers $r'''$ in $E\delta\lambda$.
\begin{enumerate}
\item If $r'''\in S$, as we have $\mathit{Confused}(P,(S\cup\{r'\})\setminus\{r\}, \E')$,
applying Lemma~\ref{lem:extendConfusion}, we obtain
$\mathit{Confused}(P\setminus\{p'\},(S\cup\{r'\}),(\E'\delta\lambda)\cup(\E'\delta\lambda|_{\{p\}}))$.
\item If $r'''\in\mathcal{R}\setminus S$, as we have $\mathit{Confused}(P,S,\E)$,
applying Lemma~\ref{lem:extendConfusion}, we obtain
$\mathit{Confused}(P\setminus\{p'\},(S\cup\{r'''\}),(\E\delta\lambda)\cup(\E\delta\lambda|_{\{p\}}))$.
\end{enumerate} 
\end{itemize}
\end{enumerate}
\end{proof}

Our lower bound directly follows from
Lemmata~\ref{lem:initConfusion} and~\ref{lem:inductiveConfusion}: 

\begin{theorem}
Any $n$-process comparison-based $2$-obstruction-free SWMR memory implementation  
requires $n+1$ MWMR registers.
\end{theorem}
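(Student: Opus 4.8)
The plan is to chain Lemmas~\ref{lem:initConfusion} and~\ref{lem:inductiveConfusion} until a \emph{single} process is confused on \emph{all} $n$ registers, and then to convert that confusion into a direct violation of $2$-obstruction-freedom (or of safety). So assume for contradiction a $2$-obstruction-free SWMR implementation using only $n$ registers, and fix, as in the Preliminaries, a set $\Pi$ of $n$ processes each issuing infinitely many Write operations with monotonically increasing arguments over the register set $\mathcal{R}$. Lemma~\ref{lem:initConfusion} supplies a reachable $\mathcal{D}$, a process, and a two-element $S\subseteq\mathcal{R}$ with $\mathit{Confused}$. I would then iterate Lemma~\ref{lem:inductiveConfusion}: whenever $\mathit{Confused}(P,S,\mathcal{D})$ holds with $|S|<n$, Condition~$2$ forces $|P|=n+1-|S|\geq 2$, hence $P\subsetneq\Pi$ and $S\subsetneq\mathcal{R}$, so the lemma applies and produces a confusion on a strictly larger register set. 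Since $|S|$ strictly increases and is capped at $n$, after finitely many steps we reach $\mathit{Confused}(\{p\},\mathcal{R},\mathcal{D})$ for some process $p$ and some reachable $\mathcal{D}$.

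For the wrap-up, note first that since $I(\mathcal{D},\{p\})$ and every step of $p$ is a $\{p\}$-only step, a fixed solo run of $p$ started from the configurations of $\mathcal{D}$ proceeds in lockstep: along it the memory contents and $p$'s local state stay identical across all these configurations, and confusion on $\mathcal{R}$ is preserved after every finite prefix (as noted inside the proof of Lemma~\ref{lem:inductiveConfusion}). Running this solo run forever, $2$-obstruction-freedom forces $p$ to complete infinitely many Write operations, and I claim that within each such operation $W$ process $p$ writes to all $n$ registers. Otherwise $p$ writes during $W$ only to some $R\subsetneq\mathcal{R}$, the \emph{same} $R$ in every configuration by the lockstep property. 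Apply Condition~$4$ to the configuration set reached just before $W$ to pick a configuration $D^*$ in which $\Pi\setminus\{p\}$ covers exactly $R$; run $p$ through $W$ from $D^*$; then, for each $r\in R$, wake a process covering $r$ to execute its pending write, yielding a configuration $E$. The woken processes act on states untouched by the $p$-only operation $W$ and overwrite exactly the registers $p$ wrote during $W$, so $E$ is indistinguishable, to every process but $p$, from the configuration $E'$ obtained by performing those pending writes directly from $D^*$. A solo Collect by one of the woken processes therefore returns the same multiset from $E$ and from $E'$; but in $E$ it must record $p$'s argument from $W$, while in $E'$ it must record the previous one --- a contradiction. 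Hence $p$ writes to every register infinitely often.

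Finally I would mask a second process. Fix $q\neq p$, let $q$ request only Writes, and build an execution from $\mathcal{D}$ that keeps $q$ hidden: whenever $q$'s next step is a write to a register $r$, advance $p$'s solo run until $p$ is poised to write to $r$ (possible by the previous paragraph, and harmless since $q$ has not yet written, so $p$'s snapshots still see only $p$'s own values), then let $q$ write to $r$ and immediately let $p$ overwrite $r$; $q$'s reads are served whenever they occur. Exactly $\{p,q\}$ are correct in this execution, so $2$-obstruction-freedom demands progress by both; $p$ never observes $q$, hence behaves as in its solo run and completes infinitely many operations. But every write of $q$ is overwritten by $p$ before any process reads it and $p$ never records a value of $q$; so if $q$ ever completes a Write $W_q$, then freezing $q$ and letting $p$ continue until it has overwritten every register and issued a Collect yields a Collect invoked after $W_q$ from a configuration indistinguishable to $p$ from one in a $q$-free solo run, hence returning no value for $q$ --- contradicting the reading map, since $q$ completed $W_q$ beforehand. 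And if $q$ never completes a Write it makes no progress, again a contradiction. Thus no such implementation exists, so $n+1$ MWMR registers are necessary. The main obstacle I anticipate is the claim that $p$ must write to every register in every Write operation: it requires marrying the lockstep/indistinguishability property (to make ``the registers written during $W$'' well defined) with Condition~$4$ of confusion and the reading-map guarantee; the masking step afterwards is comparatively routine.
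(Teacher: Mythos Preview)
Your proposal is correct and follows essentially the same route as the paper: chain Lemmata~\ref{lem:initConfusion} and~\ref{lem:inductiveConfusion} until $\mathit{Confused}(\{p\},\mathcal{R},\D)$ holds, deduce that $p$ running solo must write to every register in every high-level Write (hence infinitely often to each), and then hide a second process behind $p$'s writes to contradict $2$-obstruction-freedom together with safety. Your wrap-up is in fact more explicit than the paper's, which compresses your block-write/Collect argument into the single sentence ``$p$ may complete a Write operation only if its value is present in a register not covered in any indistinguishable configuration''; two small points to tidy are that the woken process may be mid-Write when you ask it to Collect (let it finish solo first---obstruction-freedom permits this and indistinguishability is preserved), and that Condition~4 gives a configuration where $\Pi\setminus\{p\}$ covers $R$, not \emph{exactly} $R$, though this is immaterial since you only wake covering processes for registers in $R$.
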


\begin{proof}
 By contradiction, suppose that an $n$-register algorithm exists.  
We show, by induction, that there is a reachable configuration in
which a process is confused on \emph{all} registers.
Lemma~\ref{lem:initConfusion} shows that there exists  a reachable configuration 
in which $n-1$ processes are confused on two registers.
We can therefore apply Lemma~\ref{lem:inductiveConfusion} and 
obtain a configuration with a confusion with strictly more registers. 
By induction, there exist then a set of configurations~$\D$
and $p\in \Pi$ such that $\mathit{Confused}(\{p\},\mathcal{R},\D)$.

Thus, any strict subset of $\mathcal{R}$ is covered by the remaining $n-1$ processes in some 
configuration in  the (indistinguishable for $p$) set of configurations $\D$. 
But $p$ may complete a Write operation
if and only its write value is present in a register which is 
not covered (by a process not aware of the value) in any of the configurations 
indistinguishable to $p$. Therefore, in an infinite solo execution,
$p$ must write infinitely often to \emph{all} registers. But then, any arbitrarily long 
execution by any other process can be hidden by incorporating 
sufficiently many steps of $p$, violating $2$-obstruction-freedom---a
contradiction.
\end{proof}

\section{Concluding remarks}
\label{sec:disc}

This paper shows that the optimal space complexity of SWMR
implementations depends on the desired progress condition: lock-free
algorithms trivially require $n$ registers, while $2$-obstruction-free ones
(and, thus, also $2$-lock-free ones) require $n+1$ registers.
We also extend the upper bound to $k$-lock-freedom, for all
$k=1,\ldots,n$, by presenting a $k$-lock-free SWMR implementation
using  $n+k-1$ registers.
A natural conjecture is that the algorithm is optimal, i.e., no such algorithm exists
for $n+k-2$ registers for all $k=1,\ldots,n$. 
Since for $k=1$, $2$ and $n$, 
$k$-obstruction-freedom and $k$-lock-freedom impose the same space complexity,
it also appears natural to expect that this is also true for all $k=1,\ldots,n$.

An interesting corollary to our results is that to implement a
$2$-obstruction-free SWMR memory we need strictly more space 
than to implement a $1$-lock-free one. But the two properties are,
in general, incomparable: a $2$-solo run in which only one process
makes progress satisfies $1$-lock-freedom, but not
$2$-obstruction-freedom, and a run in which $3$ or more processes are
correct but no progress is made satisfies $2$-obstruction-freedom, but not
$1$-lock-freedom.
The relative costs of incomparable progress properties, e.g., in the
$(\ell,k)$-freedom spectrum~\cite{BG15}, are yet to be understood. 

An SWMR memory can be viewed as a \emph{stable-set} abstraction
with a conventional \emph{put/get} interface: 
every participating process can put values to the set and 
get the set's content, and every get operation 
returns the values previously put.
For the stable-set abstraction, we can extend our results to the
\emph{anonymous} setting, where processes are not provided with unique
identifiers.
Indeed, we claim that the same algorithm may apply to the stable-set abstraction 
for anonymous systems when the number of participating processes $n$ is known.
But the question of whether an \emph{adaptive} solution exists (expressed
differently,  a solution that does not assume any upper bound 
on the number of participating processes) for anonymous systems remains open.

\bibliographystyle{plain}

\end{document}